

\documentclass[10pt,conference,final,twocolumn]{IEEEtran}

\makeatletter
\newif\if@restonecol
\makeatother

\usepackage{eucal}
\usepackage{graphicx}
\usepackage{times}
\usepackage{latexsym}
\usepackage{amsmath}
\usepackage{amssymb}
\usepackage{amsthm}
\usepackage{epsfig}
\usepackage{cite}
\usepackage{cases}
\usepackage{amsfonts}
\usepackage{indentfirst}
\usepackage{epsfig}
\usepackage{amsopn}
\usepackage{bm}
\usepackage[linesnumbered,ruled]{algorithm2e}
\usepackage{subfig}
\usepackage{eufrak}
\usepackage{stfloats}
\usepackage{booktabs}
\usepackage{slashbox}
\usepackage{textcomp}

\newtheorem{lemma}{Lemma}
\newtheorem{theorem}{Theorem}

\newtheorem{comment}{Comment}[section]
\newcommand{\Tr}{\mathsf{Tr}}
\newcommand{\Real}{\mathsf{Re}}



\begin{document}

\title{MIMO Beamforming Design towards Maximizing Mutual Information in Wireless Sensor Network}
\author{Yang Liu, \   Jing Li, \  and \ Xuanxuan Lu \\
Electrical and Computer Engineering Department, Lehigh University, Bethlehem, PA 18015, USA \\
Email: \{yal210@lehigh.edu, jingli@ece.lehigh.edu, xul311@lehigh.edu\}
}


\maketitle

\footnotetext{Part of this work appeared in our conference paper \cite{bib:YangLiu_ICC}, which discussed the cyclic multi-block approach in the special case of scalar source.}

\begin{abstract}

This paper considers joint beamformer design towards maximizing the mutual information in a coherent wireless sensor network with noisy observation and multiple antennae. Leveraging the weighted minimum mean square error and block coordinate ascent (BCA) framework, we propose two new and efficient methods: batch-mode BCA and  cyclic multi-block BCA. The existing batch-mode approaches require stringent conditions such as diagonal channel matrices and positive definite second-order matrices, and are therefore inapplicable to our problem. Our match-mode BCA overcomes the previous limitations via a general second-order cone programming  formation, and exhibits a strong convergence property which we have rigorously proven. The existing multi-block approaches rely on numerical solvers to handle the subproblems and some render good performance only at high signal-to-noise ratios. Exploiting the convexity of the trust-region subproblem for the convex case, our multi-block BCA significantly reduces the complexity and enhances the previous results by providing an analytical expression for the energy-preserving optimal solution. Analysis and simulations confirm the advantages of the proposed methods. 
\end{abstract}


\section{Introduction}
\label{sec:introduction}

Consider a multi-input multi-output (MIMO) wireless sensor network deployed for monitoring or surveillance purpose. Depending on the individual capacity, each sensor may be equipped with a different number of antennae and provisioned with a different power constraint. When an event $\mathbf{s}$ occurs, it is highly likely that multiple sensors may have sensed it, but each observation will likely be distorted (with independent distortion resulting from, for example, thermal noise and circuitry noise). A central question lying in this line of WSN scenarios is how the sensors should communicate with the fusion center (FC) in reliable and efficient manner.

From the information theory perspective, this type of problems may be regarded as a form of the so-called CEO (chief executive officer) problem \cite{bib:CEO_prob}, where a group of agents feed individually biased observations to a CEO, who tries to recover the original source as much as possible. From the communication perspective, what's of particular interest is the joint transceiver design between the multiple sensors and the fusion center to achieve the maximum communication efficiency.

A good number of studies exist in the literature on this and similar system models. Most of them target minimizing the mean square error (MSE, which measures the amount of signal distortion), an important criterion widely adopted in signal estimation (e.g. \cite{bib:Sensor_compress_1, bib:Sensor_compress_2, bib:Sensor_compress_3, bib:FangLi_2, bib:sensor_network_Cui, bib:sensor_network_Hamid, bib:J1_Yang_to_be_submitted, bib:JunFang_MI, bib:YangLiu_ICC}), and much fewer considered mutual information (MI,  which measures the amount of uncertainty of the source released at the receiver). Mutual information not only is important in its own right, but also presents useful relevance to MSE and signal-to-noise ratio (SNR) \cite{bib:WMMSE_original} \cite{bib:WMMSE_QingjiangShi}.

The goal of this paper is to develop new and efficient ways to design MIMO transceivers (beamformers) that enable maximal end-to-end mutual information in a noisy-observation multi-sensor wireless network. Since beamformer design problems are usually nonconvex and hence difficult, and since a large set of variables and constraints are typically involved, it has become a standard approach to leverage the block coordinate descent/ascent (BCD/BCA) tool to break down the original problem into a set of smaller subproblems (blocks) involving fewer variables each. Nonetheless,  each subproblem can still be difficult to handle, and the specific method to address them  can make a huge difference in performance and especially in complexity (and speed). Further, a small difference in the system model (e.g. separate power constraints vs a total power constraint) can sometimes lead to a rather different optimization problem, causing a previous method to become rather inefficient or totally inapplicable.

Several early beamformer papers for MIMO WSN have focused on compression or signal space reduction (i.e. precoders with rate $<1$) (e.g. \cite{bib:Sensor_compress_1, bib:Sensor_compress_2, bib:Sensor_compress_3, bib:FangLi_2}). 
Recent studies consider more general transceiver design where the beamformer can have an arbitrary (positive) rate, and have investigated a variety of system setups. Specifically, a very general system model for the WSN noisy-sensor observation problem was introduced in \cite{bib:sensor_network_Cui}, whose parameters include channel fading and additive noise, separate power constraints, and either orthogonal or coherent multiple access channels (MAC). Due to the difficulty in attacking the general model, \cite{bib:sensor_network_Cui} instead provided solutions to specific (but nonetheless very interesting) cases, including the case when the source is a scalar, when the sensor-FC channels are noiseless, and when the sensor-FC channels are non-interferenced (i.e. sensor-FC channel fading matrices are diagonal matrices). Following this general model, a highly noteworthy work is \cite{bib:sensor_network_Hamid},  which developed a very useful BCD-based method targeting MMSE beamformers for  coherent MAC. The method developed in \cite{bib:sensor_network_Hamid} is relatively general and capable of obtaining numerical beamformers for most of the cases --- except when a second-order coefficient matrix for the quadratic optimization problem becomes singular (in which case the method with zero Lagrangian multiplier will fail). A more recent paper \cite{bib:J1_Yang_to_be_submitted}  developed a different and even more general iterative BCD-based  beamformer design method targeting MMSE for the same system model. In addition to solving all the cases，\cite{bib:J1_Yang_to_be_submitted} also demonstrated the convergence of any essentially-cyclic block-updating rule (including the specific block-updating rule used in \cite{bib:sensor_network_Hamid}). 

These previous excellent studies all take MSE as the optimization objective. 
Since mutual information presents an essential indicator for communication efficiency, it is also highly interesting to see how beamformers can be designed to achieve maximal MI. 
The topic is somewhat more involved than the target MSE, and also appears less researched. The only paper we could find in the WSN literature targeting maximal MI beamformer design is a recent paper \cite{bib:JunFang_MI}, which considered an orthogonal sensor-FC channel (a model drastically different from the coherent MAC model considered here). 

It is worth noting that there is a coherent MIMO amplify-forward (AF) model developed in the recent relay literature \cite{bib:MIMO_AF_Relay, bib:CWXing, bib:TSTINR}. The model thereof consists of several source-destination pairs communicating via the help of a group of common relays performing amplify-and-forward (AF). The source, relay and destination nodes in the relay system are like the sources/events, sensors and fusion centers in our sensing system; the difference is that in the sensing system, the first hop is passive transmission with no precoder performed at the transmitter. Since the relay model subsumes our sensing model (by setting the number of source-destine pairs to one), several interesting ideas developed thereof can be relevant. However, as will be shown later (in Subsection \ref{subsec:contribution}, Comment \ref{cmt:3_BCA}), Comment \ref{cmt:convergence}, Table \ref{tab:runningtime1}, Figure \ref{fig:comparisonTSTINR}), our proposed methods are either more efficient or more effective than all of these existing approaches.

\subsection{Contribution of This Paper}
\label{subsec:contribution}

The key contribution of this paper is the development of two new beamforming design methods targeting maximal MI for the noisy-observation multi-sensor system, and the proof of strong convergence results.  Unlike many previous approaches that are limited to only certain models \cite{bib:Sensor_compress_1, bib:Sensor_compress_2, bib:Sensor_compress_3, bib:FangLi_2, bib:sensor_network_Cui, bib:YangLiu_ICC} (where signal dimension has specific constraints, or channel has no fading or noise), the methods developed here are general, and applicable to beamformers with any dimensions and channels with noise and fading. For the rather general sensing model considered in Fig. \ref{fig:sysmodel}, by leveraging the tool of weighted MMSE (WMMSE)\cite{bib:WMMSE_original} \cite{bib:WMMSE_QingjiangShi}, we decompose the original problem into three subproblems that optimize the weight matrix, the postcoder and the set of precoders, respectively. We present analytical solutions for the first two subproblems, and for the third one where closed-form analysis is intractable, we develop two new and efficient methods to solve them.

(i) The first method is a batch-mode approach that designs all the precoders at once. Specifically, we succeeded in transforming the subproblem into a rather general second-order cone programming (SOCP) formulation, which then lends itself to relatively efficient numerical solvers. Not many batch-mode approaches have been previously attempted, and the solutions in the literature are either applicable to only simpler special channel cases \cite{bib:sensor_network_Cui}\footnote{The method in \cite{bib:sensor_network_Cui} is only applicable to the special case where all the channel matrices are diagonal (i.e. parallel channels). Our SOCP approach is applicable to all channel matrices, and runs faster than the semidefinite programming method in \cite{bib:sensor_network_Cui}.} or require additional assumptions \cite{bib:CWXing}\footnote{\cite{bib:CWXing} requires that the second-order matrices in both the objective and the constraint functions be positive definite, which is actually a fairly stringent requirement that cannot be met in our system model. In comparison, our batch-mode approach only requires positive semidefinite second-order matrices which is always satisfied.}. The new batch-mode approach has overcome the previous limitations by providing solution that has a broader usability and runs faster. It also allows us to perform rigorous convergence analysis which has been attempted but nevertheless missing in the previous studies (i.e. only conjectures/discussions but not proofs were presented in the literature, e.g. \cite{bib:MIMO_AF_Relay} \cite{bib:CWXing}). Our convergence analysis confirms that the limit points of our solution satisfy the Karush-Kuhn-Tucker (KKT) condition of the original MI maximization problem.

(ii) The second method is a cyclic multi-block approach that further decomposes the task of optimizing the beamformers at all the sensors into multiple subproblems. 
The subproblems are shown to be convex trust-region subproblems. 
The previous approaches inevitably rely on numerical solvers, such as semidefinite relaxation (SDR) \cite{bib:MIMO_AF_Relay}, 
or the conventional numerical solution to general trust-region surbproblem \cite{bib:TSTINR}, to solve these subproblems, and some of them are only applicable to strictly convex subproblems \cite{bib:sensor_network_Hamid}. Exploiting the convexity, we successfully enhance the standard solution to the trust-region subproblem for the convex case by developing near closed-form solutions (possibly up to a simple one-dimension bisection search). Compared to the conventional numerical methods to solve trust-region subproblems in the literature, our new analytical result has a significantly lower complexity, and provides a better insight into the solution. It clearly delineates the cases for unique optimal solutions and multiple optimal solutions, and the in the latter case, can provide the solution with the least energy consumption (with minimum $l_2$-norm). For scalar sources, a completely closed-form solution is also derived, which eliminates the eigonvalue decomposition or bisection search. These results are not known previously in the literature and greatly improves the efficiency of beamformer design.

\emph{Notations}: We use bold lowercases and bold uppercases to denote complex vectors and  complex matrices, respectively. $\mathbf{0}$, $\mathbf{O}_{m\times n}$, and $\mathbf{I}_m$ represent the all-zero vector, the all-zero matrix of dimension $m\times n$, and the identity matrix of order $m$, respectively. $\mathbf{A}^T$, $\mathbf{A}^{\ast}$, $\mathbf{A}^H$, and $\mathbf{A}^{\dagger}$ represent the transpose, the conjugate, the conjugate transpose (Hermitian transpose), and the Moore-Penrose pseudoinverse, respectively, of a complex matrix $\mathbf{A}$. $\Tr\{\cdot\}$ represents the trace operation of a square matrix. $|\cdot|$ represents the modulus of a complex scalar, and $\|\cdot\|_2$ represents the $l_2$-norm of a complex vector. $vec(\cdot)$ represents the vectorization operation of a matrix, which is performed by packing the columns of a matrix into a single long column. $\otimes$ denotes the Kronecker product. $\mathsf{Diag}\{\mathbf{A}_1,\cdots,\mathbf{A}_n\}$ denotes the block diagonal matrix with its $i$-th diagonal block being the square complex matrix $\mathbf{A}_i$, $i\in\{1,\cdots,n\}$. $\mathcal{H}^n_{+}$ and $\mathcal{H}^n_{++}$ represent the cones of positive semidefinite and positive definite matrices of dimension $n$, respectively, and $\succeq0$ and $\succ0$ mean that a square complex matrix belongs to $\mathcal{H}^n_{+}$ and $\mathcal{H}^n_{++}$, respectively. $\mathcal{R}(\mathbf{Q})$ and $\mathcal{N}(\mathbf{A})$ denote the range space and null space of a matrix $\mathbf{A}$ respectively.  $\mathsf{Re}\{x\}$ means taking the real part of a complex value $x$.

\begin{figure}[htb]
\centerline{
\includegraphics[width=0.5\textwidth]{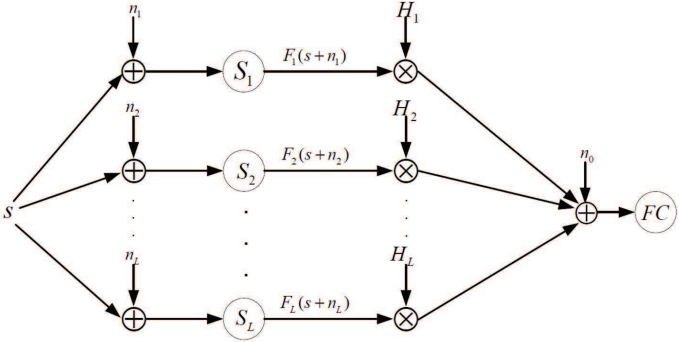}
}
\caption{Multi-Sensor System Model}
\label{fig:sysmodel}
\end{figure}


\section{System Model}
\label{sec:system model}

As illustrated in Fig.\ref{fig:sysmodel}, the MIMO wireless sensing system we consider include $L$ sensors, each equipped with $N_i$ antennae ($i\!=\!1,2,\cdots,L$),  and a fusion center equipped with $M$ antennae. Following the convention, we assume the source signal $\mathbf{s}$ to be a complex Gaussian\footnote{Gaussian source is always assumed in this line of research, as it is the only source type that enables a closed-form MI expression on AWGN channels. Gaussian source is also the most efficient, as it  achieves the capacity of AWGN channels, and provides an upper bound on the amount of information perceived at the fusion center.} vector of dimension $K$, i.e. $\mathbf{s}\sim\mathcal{CN}(\mathbf{0},\mathbf{\Sigma}_{\mathbf{s}})$ where $\mathbf{\Sigma}_{\mathbf{s}}$ is the positive definite covariance matrix. 
Due to the interference from the surroundings and/or the thermal noise from the sensing device, the observations made at each sensor tend to be individually contaminated. We model the sensing distortion using the complex Gaussian distribution: $\mathbf{n}_i\sim\mathcal{CN}(\mathbf{0},  
\mathbf{\Sigma}_i)$, where $\mathbf{\Sigma}_i\in\mathbb{C}^{K\times K}$ is the covariance matrix and   
$\mathbf{n}_i$'s are mutually uncorrelated for $i\!=\!1,2,\cdots,L$. Hence, the $i$-th sensor observes $\mathbf{s}\!+\!\mathbf{n}_i$.

Each sensor applies a linear beamformer $\mathbf{F}_i$ to its observation before putting it through the coherence MAC to the FC. Let $\{\mathbf{H}_i\}\in\mathbb{C}^{M\times N_i}$  be the channel fading matrix between the $i$-th sensor and the FC. $\{\mathbf{H}_i\}$ is assumed to be known to the FC (as it can be obtained via advanced channel estimation techniques). Let $\mathbf{n}_0$ be the receiver-end complex additive white Gaussian noise (AWGN), $\mathbf{n}_0\sim\mathcal{CN}(\mathbf{0}, \sigma_0^2\mathbf{I}_{M})\in\mathbb{C}^{M\times1}$. The signal received at the fusion center is therefore given by:
\begin{align}
\mathbf{r}&=\sum_{i=1}^L\mathbf{H}_i\mathbf{F}_i\big(\mathbf{s}+\mathbf{n}_i\big)+\mathbf{n}_0 \label{eq:received_signal}\\
&=\Big(\sum_{i=1}^L\mathbf{H}_i\mathbf{F}_i\Big)\mathbf{s}+\Big(\underbrace{\sum_{i=1}^L\mathbf{H}_i\mathbf{F}_i\mathbf{n}_i+\mathbf{n}_0}_{\mathbf{n}}\Big),\label{eq:received_equivalent_signal}
\end{align}
where the compound noise vector $\mathbf{n}$ is still Gaussian: $\mathbf{n}\sim\mathcal{CN}(\mathbf{0},\mathbf{\Sigma}_{\mathbf{n}})$ whose covariance matrix $\mathbf{\Sigma}_{\mathbf{n}}$ is 
\begin{align}
\mathbf{\Sigma}_{\mathbf{n}}=\sigma_0^2\mathbf{I}_M+ \sum_{i=1}^L\mathbf{H}_i\mathbf{F}_i\mathbf{\Sigma}_i\mathbf{F}_i^H\mathbf{H}_i^H. \label{eq:cov_matrix_n}
\end{align}

It should be noted that the whiteness assumption of the Gaussian noise $\mathbf{n}_0$ at the receiver does not undermine the generality of the model. 
Indeed if $\mathbf{n}_0\sim\mathcal{CN}(\mathbf{0},\mathbf{\Sigma}_0)$ has coloured covariance $\mathbf{\Sigma}_0$, by redefining $\widetilde{\mathbf{r}}\triangleq \mathbf{\Sigma}_0^{-\frac{1}{2}}\mathbf{r}$, $\widetilde{\mathbf{H}}_i\triangleq\mathbf{\Sigma}_0^{-\frac{1}{2}}\mathbf{H}_i$ and $\widetilde{\mathbf{n}}_0\triangleq \mathbf{\Sigma}_0^{-\frac{1}{2}}\mathbf{n}_0$, the received signal can be equivalently written as  
\begin{align}
\widetilde{\mathbf{r}}=\sum_{i=1}^L\widetilde{\mathbf{H}}_i\mathbf{F}_i\big(\mathbf{s}+\mathbf{n}_i\big)+\widetilde{\mathbf{n}}_0,\label{eq:equivalent_signal}
\end{align}
with $\mathbf{\widetilde{\mathbf{n}}_0\sim\mathcal{CN}}(\mathbf{0}, \mathbf{I}_M)$, which falls into the model in (\ref{eq:received_signal}).


\subsection{Problem Formulation}
\label{subsec:problem_formulation}

In the beamforming problem that targets MMSE, a linear postcoder is usually employed on the destination end (i.e. the FC), and should be optimized together with the precoders at the transmitter end (i.e. sensors). 
In comparison, the definition of the mutual information does not include the receiver in the equation. This is due to the fundamental conclusion in information theory that mutual information can not increase whatever signal processing procedure is performed at the receiver. The mutual information between the source $\mathbf{s}$ and the reception at the FC is given in equation (\ref{eq:MI_full_definition}) shown in the next page:
\newcounter{TempEqnCnt}
\begin{figure*}[!t]
\normalsize
\setcounter{TempEqnCnt}{\value{equation}}
\setcounter{equation}{4}
\begin{align}
\mathsf{MI}\Big(\big\{\mathbf{F}_i\big\}_{i=1}^L\Big)=\log\det\bigg\{\mathbf{I}_M+\Big(\sum_{i=1}^L\mathbf{H}_i\mathbf{F}_i\Big)\mathbf{\Sigma}_{\mathbf{s}}\Big(\sum_{i=1}^L\mathbf{H}_i\mathbf{F}_i\Big)^H\Big(\sigma_0^2\mathbf{I}+\sum_{i=1}^L\mathbf{H}_i\mathbf{F}_i\mathbf{\Sigma}_i\mathbf{F}_i^H\mathbf{H}_i^H\Big)^{-1}\bigg\}
\label{eq:MI_full_definition}
\end{align}
\setcounter{equation}{\value{TempEqnCnt}}
\hrulefill
\vspace*{4pt}
\end{figure*}
\setcounter{equation}{5}

In practice, each sensor has independent transmit power constraint in accordance to individual battery supply. The average transmit power for the $i$-th sensor is $\mathsf{E}\big\{\big\|\mathbf{F}_i\big(\mathbf{s}+\mathbf{n}_i\big)\big\|^2_2\big\}=\Tr\big\{\mathbf{F}_i\big(\mathbf{\Sigma}_{\mathbf{s}}+\mathbf{\Sigma}_i\big)\mathbf{F}_i^H\big\}$, which must abide by its power constraint $P_i$. Thus the beamformer design problem of the noisy-observation sensing problem can be formulated as the following optimization problem:
\begin{subequations}
\label{eq:uniform_problem}
\begin{align}
\!\!\!\!\!\!\!(\mathsf{P}0)\!:\!\underset{\{\mathbf{F}_i\}_{i=1}^{L}}{\max.}& \ \mathsf{MI}\big(\{\mathbf{F}_i\}_{i=1}^L\big), \\
\!\!\!\!\!\!\!\mathrm{s.t.}& \Tr\big\{\mathbf{F}_i\big(\mathbf{\Sigma}_{\mathbf{s}}+\mathbf{\Sigma}_i\big)\mathbf{F}_i^H\big\}\!\leq\! P_i,\ i\!\in\!\{1,\cdots,L\}.
\end{align}
\end{subequations}

The above optimization problem is nonconvex, which can be easily shown by examining the convexity of the special case where $\{\mathbf{F}_i\}_{i=1}^L$ are all scalars. 
Convex problems are difficult, and cannot be solved in one shot. Below we tackle it using the popular BCA framework.


\section{Algorithm Design}
\label{sec:optimizing_MI}

\subsection{The General Approach via WMMSE and BCA}

The idea of BCA is to decompose the original problem of simultaneously optimizing the entire cohort of variables to multiple subproblems, each optimizing a subset of variables. However, for problem ($\mathsf{P0}$), directly applying the BCA method to partition the beamformers that await to be optimized into subgroups does not help simplify the problem. Even if we target only one beaformer at a time, the objective function is still very hard to maximize. Inspired by the weighted mean square error method proposed by the seminal papers \cite{bib:WMMSE_original} and \cite{bib:WMMSE_QingjiangShi}, we introduce auxiliary variables to convert the objective into a BCA-friendly form and subsequently decompose the problem into solvable subproblems. 

We first introduce two useful lemmas which pave the way for transforming the original hard problem ($\mathsf{P}0$).


\begin{lemma}[\cite{bib:WMMSE_original, bib:WMMSE_QingjiangShi}]
\label{lem:WMMSE_derivative}
For any positive definite matrix $\mathbf{E}\in\mathcal{H}^{n}_{++}$
, the following fact holds true
\begin{align}
\!\!\!\!\!\!\!\!-\log\det(\mathbf{E})=\underset{\mathbf{W}\in\mathcal{H}^{n}_{++}}{\max.}\big\{\ \log\det\big(\mathbf{W}\big)-\Tr\{\mathbf{W}\mathbf{E}\}+n\big\}
\end{align}
with the optimal solution $\mathbf{W}^{\star}$ given as
\begin{align}
\mathbf{W}^{\star}=\mathbf{E}^{-1}.
\end{align} 
\end{lemma}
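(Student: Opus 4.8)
The plan is to recognize that the right-hand side is the maximization of a concave function of $\mathbf{W}$ over the open convex cone $\mathcal{H}^n_{++}$, so that a single stationary point will be the global maximizer. Concretely, I would set $g(\mathbf{W}) \triangleq \log\det(\mathbf{W}) - \Tr\{\mathbf{W}\mathbf{E}\} + n$ and observe that $\log\det(\mathbf{W})$ is concave on $\mathcal{H}^n_{++}$ while $-\Tr\{\mathbf{W}\mathbf{E}\}+n$ is affine in $\mathbf{W}$; hence $g$ is concave, and any zero of its gradient is automatically a global maximizer. Using the standard Hermitian matrix-derivative identities $\partial_{\mathbf{W}}\log\det(\mathbf{W}) = \mathbf{W}^{-1}$ and $\partial_{\mathbf{W}}\Tr\{\mathbf{W}\mathbf{E}\} = \mathbf{E}$, the stationarity condition reads $\mathbf{W}^{-1} - \mathbf{E} = \mathbf{0}$, giving the claimed optimizer $\mathbf{W}^{\star} = \mathbf{E}^{-1}$. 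Substituting back, $g(\mathbf{E}^{-1}) = \log\det(\mathbf{E}^{-1}) - \Tr\{\mathbf{E}^{-1}\mathbf{E}\} + n = -\log\det(\mathbf{E}) - n + n = -\log\det(\mathbf{E})$, which is exactly the asserted value.

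To make the argument fully self-contained (and to avoid invoking complex matrix calculus), I would instead give a direct inequality proof that simultaneously pins down the equality condition. The idea is to diagonalize the product: let $\mathbf{M} \triangleq \mathbf{W}^{1/2}\mathbf{E}\,\mathbf{W}^{1/2}$, which lies in $\mathcal{H}^n_{++}$ since $\mathbf{W},\mathbf{E} \succ 0$, and let $\lambda_1,\dots,\lambda_n > 0$ be its eigenvalues. Because $\det(\mathbf{W}\mathbf{E}) = \det(\mathbf{M})$ and $\Tr\{\mathbf{W}\mathbf{E}\} = \Tr\{\mathbf{M}\}$, I can write
\begin{align}
g(\mathbf{W}) + \log\det(\mathbf{E}) = \log\det(\mathbf{M}) - \Tr\{\mathbf{M}\} + n = \sum_{i=1}^{n}\big(\log\lambda_i - \lambda_i + 1\big).
\end{align}
The scalar inequality $\log x \le x - 1$ for all $x > 0$, with equality iff $x = 1$, then forces each summand to be nonpositive, so $g(\mathbf{W}) \le -\log\det(\mathbf{E})$ for every $\mathbf{W} \in \mathcal{H}^n_{++}$. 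Equality holds iff every $\lambda_i = 1$, i.e. $\mathbf{M} = \mathbf{I}_n$, which is equivalent to $\mathbf{W}^{1/2}\mathbf{E}\,\mathbf{W}^{1/2} = \mathbf{I}_n$, and hence to $\mathbf{W} = \mathbf{E}^{-1}$; this delivers both the maximum value and the uniqueness of $\mathbf{W}^\star$ in one stroke.

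I expect the only genuinely delicate points to be bookkeeping rather than conceptual. On the calculus route, the subtlety is justifying the derivative identities in the complex Hermitian setting and confirming that concavity upgrades the stationary point to a global (and unique) maximizer; on the eigenvalue route, the corresponding care is in verifying $\det(\mathbf{W}\mathbf{E}) = \det(\mathbf{M})$ and the trace identity via the similarity/cyclicity of $\mathbf{W}^{1/2}\mathbf{E}\mathbf{W}^{1/2}$ and $\mathbf{W}\mathbf{E}$, and in arguing that $\mathbf{M} = \mathbf{I}_n$ pulls back to $\mathbf{W} = \mathbf{E}^{-1}$. Since the eigenvalue argument cleanly yields the maximizer, the maximal value, and the equality condition without appealing to matrix differentiation, I would present that as the main proof.
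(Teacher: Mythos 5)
The paper does not actually prove this lemma: it is stated as a known result imported from the WMMSE literature (the references attached to the lemma), so there is no in-paper argument to compare against. Your proof is correct and self-contained. Of the two routes you sketch, the eigenvalue one is the cleaner and is essentially the standard derivation behind the cited result: the reduction to $\mathbf{M}=\mathbf{W}^{1/2}\mathbf{E}\mathbf{W}^{1/2}$ is legitimate ($\det$ and $\Tr$ match those of $\mathbf{W}\mathbf{E}$ by multiplicativity of the determinant and cyclicity of the trace), the scalar bound $\log x\le x-1$ gives $g(\mathbf{W})\le-\log\det(\mathbf{E})$ termwise, and the equality case $\mathbf{M}=\mathbf{I}_n$ pulls back to $\mathbf{W}=\mathbf{E}^{-1}$ since $\mathbf{W}^{1/2}$ is invertible. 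This buys you slightly more than the lemma claims, namely uniqueness of the maximizer, which the concavity route would need strict concavity of $\log\det$ to recover. The calculus sketch is also fine as a heuristic, though as you note it leans on complex Hermitian matrix-derivative identities that would themselves need justification; presenting the eigenvalue argument as the main proof is the right call.
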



\begin{lemma}
\label{lem:Wiener_filter_generalized}
(Generalized Wiener Filter) Define a matrix function $\mathbf{E}\big(\mathbf{G}\big)$ of variable $\mathbf{G}$ as
\begin{align}
\label{eq:matrix_function}
\mathbf{E}\big(\mathbf{G}\big)\triangleq\big(\mathbf{I}-\mathbf{G}^H\mathbf{H}\big)\mathbf{\Sigma}_{\mathbf{s}}\big(\mathbf{I}-\mathbf{G}^H\mathbf{H}\big)^H+\mathbf{G}^H\mathbf{\Sigma}_{\mathbf{n}}\mathbf{G},
\end{align}
where $\mathbf{\Sigma}_{\mathbf{s}}$ and $\mathbf{\Sigma}_{\mathbf{n}}$ are positive definite matrices. For any positive definite matrix $\mathbf{W}$, the following optimization problem 
\begin{align}
\underset{\mathbf{G}}{\min.}\Tr\big\{\mathbf{W}\mathbf{E}(\mathbf{G})\big\}\label{eq:opt_prob_trace_WE}
\end{align}
can be solved by the optimal solution 
\begin{align}
\mathbf{G}^{\star}=\big(\mathbf{H}\mathbf{\Sigma}_{\mathbf{s}}\mathbf{H}^H+\mathbf{\Sigma}_\mathbf{n}\big)^{-1}\mathbf{H}\mathbf{\Sigma}_{\mathbf{s}},\label{eq:MMSE_G}
\end{align} 
which leads to 
\begin{align}
\mathbf{E}(\mathbf{G}^{\star})=\big(\mathbf{H}^H\mathbf{\Sigma}_{\mathbf{n}}^{-1}\mathbf{H}+\mathbf{\Sigma}_{\mathbf{s}}^{-1}\big)^{-1}. \label{eq:minimal_MSE_matrix}
\end{align}
\end{lemma}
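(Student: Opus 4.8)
The plan is to recast the problem \eqref{eq:opt_prob_trace_WE} as an unconstrained matrix-valued quadratic minimization and to exploit the (perhaps surprising) fact that the minimizer is independent of the weight $\mathbf{W}$. First I would expand the definition \eqref{eq:matrix_function}. Abbreviating $\mathbf{A}\triangleq\mathbf{H}\mathbf{\Sigma}_{\mathbf{s}}\mathbf{H}^H+\mathbf{\Sigma}_{\mathbf{n}}$, which is positive definite because $\mathbf{\Sigma}_{\mathbf{n}}\succ 0$ and $\mathbf{H}\mathbf{\Sigma}_{\mathbf{s}}\mathbf{H}^H\succeq 0$, the matrix $\mathbf{E}(\mathbf{G})$ becomes $\mathbf{\Sigma}_{\mathbf{s}}-\mathbf{G}^H\mathbf{H}\mathbf{\Sigma}_{\mathbf{s}}-\mathbf{\Sigma}_{\mathbf{s}}\mathbf{H}^H\mathbf{G}+\mathbf{G}^H\mathbf{A}\mathbf{G}$. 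Setting $\mathbf{G}^{\star}\triangleq\mathbf{A}^{-1}\mathbf{H}\mathbf{\Sigma}_{\mathbf{s}}$ and using $\mathbf{A}\mathbf{G}^{\star}=\mathbf{H}\mathbf{\Sigma}_{\mathbf{s}}$, a completion of the square at the matrix level yields the identity $\mathbf{E}(\mathbf{G})=\big(\mathbf{\Sigma}_{\mathbf{s}}-\mathbf{\Sigma}_{\mathbf{s}}\mathbf{H}^H\mathbf{A}^{-1}\mathbf{H}\mathbf{\Sigma}_{\mathbf{s}}\big)+(\mathbf{G}-\mathbf{G}^{\star})^H\mathbf{A}(\mathbf{G}-\mathbf{G}^{\star})$, in which the first bracket does not depend on $\mathbf{G}$ and the residual term is positive semidefinite. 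This already shows $\mathbf{E}(\mathbf{G})\succeq\mathbf{E}(\mathbf{G}^{\star})$ in the Loewner order, \emph{uniformly in} $\mathbf{W}$.

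Second, I would convert this Loewner inequality into the weighted-trace objective. For any $\mathbf{W}\succ 0$, factor $\mathbf{W}=\mathbf{W}^{1/2}\mathbf{W}^{1/2}$ and $\mathbf{A}=\mathbf{A}^{1/2}\mathbf{A}^{1/2}$; cyclic invariance of the trace gives $\Tr\big\{\mathbf{W}(\mathbf{G}-\mathbf{G}^{\star})^H\mathbf{A}(\mathbf{G}-\mathbf{G}^{\star})\big\}=\Tr\{\mathbf{B}^H\mathbf{B}\}\ge 0$ with $\mathbf{B}\triangleq\mathbf{A}^{1/2}(\mathbf{G}-\mathbf{G}^{\star})\mathbf{W}^{1/2}$, and equality holds iff $\mathbf{B}=\mathbf{0}$, i.e. $\mathbf{G}=\mathbf{G}^{\star}$, since $\mathbf{A}^{1/2}$ and $\mathbf{W}^{1/2}$ are invertible. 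Hence $\mathbf{G}^{\star}=\big(\mathbf{H}\mathbf{\Sigma}_{\mathbf{s}}\mathbf{H}^H+\mathbf{\Sigma}_{\mathbf{n}}\big)^{-1}\mathbf{H}\mathbf{\Sigma}_{\mathbf{s}}$ is the unique minimizer, which is exactly \eqref{eq:MMSE_G}.

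Finally, to obtain the compact form \eqref{eq:minimal_MSE_matrix}, I would apply the matrix inversion lemma (Woodbury identity) to the constant term $\mathbf{E}(\mathbf{G}^{\star})=\mathbf{\Sigma}_{\mathbf{s}}-\mathbf{\Sigma}_{\mathbf{s}}\mathbf{H}^H\big(\mathbf{H}\mathbf{\Sigma}_{\mathbf{s}}\mathbf{H}^H+\mathbf{\Sigma}_{\mathbf{n}}\big)^{-1}\mathbf{H}\mathbf{\Sigma}_{\mathbf{s}}$. Identifying the right-hand side with $\big(\mathbf{\Sigma}_{\mathbf{s}}^{-1}+\mathbf{H}^H\mathbf{\Sigma}_{\mathbf{n}}^{-1}\mathbf{H}\big)^{-1}$ is precisely the Woodbury formula under the correspondence $\mathbf{\Sigma}_{\mathbf{s}}^{-1}\!\to\!\mathbf{A}$, $\mathbf{H}^H\!\to\!\mathbf{U}$, $\mathbf{\Sigma}_{\mathbf{n}}^{-1}\!\to\!\mathbf{C}$, $\mathbf{H}\!\to\!\mathbf{V}$, whose applicability is guaranteed by the standing hypotheses that $\mathbf{\Sigma}_{\mathbf{s}}$ and $\mathbf{\Sigma}_{\mathbf{n}}$ are positive definite (hence invertible).

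I expect the main obstacle to be conceptual rather than computational: the crux is recognizing that the optimal $\mathbf{G}^{\star}$ is the same for every admissible weight $\mathbf{W}$. The completing-the-square decomposition is what makes this transparent, because it separates $\mathbf{E}(\mathbf{G})$ into a $\mathbf{W}$-free constant matrix plus a positive-semidefinite residual whose weighted trace is driven to zero at $\mathbf{G}=\mathbf{G}^{\star}$ simultaneously for all $\mathbf{W}\succ 0$. Once this structural observation is in place, the remainder is the routine algebra of the classical Wiener filter together with a single application of the matrix inversion lemma.
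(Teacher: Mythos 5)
Your proof is correct, but it takes a genuinely different route from the paper's. The paper first argues that $\Tr\{\mathbf{W}\mathbf{E}(\mathbf{G})\}$ is convex in $\mathbf{G}$ by vectorizing the quadratic terms and observing that Kronecker products such as $\mathbf{W}^{\ast}\otimes(\mathbf{H}\mathbf{\Sigma}_{\mathbf{s}}\mathbf{H}^H)$ are positive semidefinite, and then obtains $\mathbf{G}^{\star}$ from the stationarity condition $\big[(\mathbf{H}\mathbf{\Sigma}_{\mathbf{s}}\mathbf{H}^H+\mathbf{\Sigma}_{\mathbf{n}})\mathbf{G}-\mathbf{H}\mathbf{\Sigma}_{\mathbf{s}}\big]\mathbf{W}=\mathbf{O}$, cancelling $\mathbf{W}$ by its invertibility. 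You instead complete the square at the matrix level, writing $\mathbf{E}(\mathbf{G})=\big(\mathbf{\Sigma}_{\mathbf{s}}-\mathbf{\Sigma}_{\mathbf{s}}\mathbf{H}^H\mathbf{A}^{-1}\mathbf{H}\mathbf{\Sigma}_{\mathbf{s}}\big)+(\mathbf{G}-\mathbf{G}^{\star})^H\mathbf{A}(\mathbf{G}-\mathbf{G}^{\star})$ with $\mathbf{A}=\mathbf{H}\mathbf{\Sigma}_{\mathbf{s}}\mathbf{H}^H+\mathbf{\Sigma}_{\mathbf{n}}$, which is entirely calculus-free and yields the stronger Loewner-order statement $\mathbf{E}(\mathbf{G})\succeq\mathbf{E}(\mathbf{G}^{\star})$, making the independence of the minimizer from $\mathbf{W}$ structurally transparent rather than a consequence of cancelling $\mathbf{W}$ in a first-order condition. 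Your argument also delivers two things the paper's proof does not: uniqueness of the minimizer (via the equality case of $\Tr\{\mathbf{B}^H\mathbf{B}\}\geq 0$ with $\mathbf{A}^{1/2}$ and $\mathbf{W}^{1/2}$ invertible), and an actual verification of the claimed value \eqref{eq:minimal_MSE_matrix}, which the paper states in the lemma but never derives in its proof; your single application of the Woodbury identity closes that gap. The trade-off is that the paper's derivative-based argument generalizes more directly to situations where completion of the square is unavailable, whereas yours is the cleaner and more self-contained proof of this particular statement.
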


\begin{proof}
The problem in (\ref{eq:opt_prob_trace_WE}) is a convex problem. To see this, notice that the objective function in (\ref{eq:opt_prob_trace_WE}) is a quadratic function of $\mathbf{G}$ with its quadratic terms given as 
\begin{align}
\label{eq:quadratic_terms}
\Tr\big\{\mathbf{W}\mathbf{G}^H\mathbf{H}\mathbf{\Sigma}_{\mathbf{s}}\mathbf{H}^H\mathbf{G}\big\}+\Tr\big\{\mathbf{W}\mathbf{G}^H\mathbf{\Sigma}_{\mathbf{n}}\mathbf{G}\big\}.
\end{align}
By the identities $\mathsf{Tr}\{\mathbf{AB}\}=\mathsf{Tr}\{\mathbf{BA}\}$ and $\mathsf{Tr}\{\mathbf{ABCD}\}=vec^T(\mathbf{D}^T)\big[\mathbf{C}^T\otimes\mathbf{A}\big]vec(\mathbf{B})$, the first term of the above quadratic terms can be rewritten as 
\begin{align}
\!\!\!\!\!\!\!\!\!\!\!\!\!\Tr\big\{\mathbf{W}\mathbf{G}^H\mathbf{H}\mathbf{\Sigma}_{\mathbf{s}}\mathbf{H}^H\mathbf{G}\big\}\!\!=\!\!vec^H\!(\mathbf{G})\Big[\!\mathbf{W}^{\ast}\!\!\!\otimes\!\!\big(\mathbf{H}\mathbf{\Sigma}_{\mathbf{s}}\mathbf{H}^H\!\big)\!\Big]vec(\mathbf{G}).
\end{align}
Notice that $\mathbf{W}$ and $\mathbf{H}\mathbf{\Sigma}_{\mathbf{s}}\mathbf{H}^H$ are both positive semi-definite, so $\Big[\mathbf{W}^{\ast}\!\!\otimes\!\big(\mathbf{H}\mathbf{\Sigma}_{\mathbf{s}}\mathbf{H}^H\big)\Big]$ is positive semi-definite, 
and thus the first quadratic term is a convex function of $\mathbf{G}$. Similarly, the second quadratic term in (\ref{eq:quadratic_terms}) can also be proved to be convex function of $\mathbf{G}$. Thus (\ref{eq:opt_prob_trace_WE}) is a non-constrained convex problem of $\mathbf{G}$. By setting the derivative with respective to $\mathbf{G}$ to zero, 
we obtain 
\begin{align}
\!\!\frac{\partial\Tr\big\{\mathbf{W}\mathbf{E}(\mathbf{G})\big\}}{\partial\mathbf{G}^{\ast}}\!=\!\!\Big[\Big(\mathbf{H}\mathbf{\Sigma}_{\mathbf{s}}\mathbf{H}^H\!\!+\!\!\mathbf{\Sigma}_{\mathbf{n}}\Big)\mathbf{G}-\mathbf{H}\mathbf{\Sigma}_{\mathbf{s}}\Big]\mathbf{W}\!\!=\!\mathbf{O}.
\end{align}
Since $\mathbf{W}$ is positive definite, this clearly leads to 
$\Big(\mathbf{H}\mathbf{\Sigma}_{\mathbf{s}}\mathbf{H}^H\!\!+\!\!\mathbf{\Sigma}_{\mathbf{n}}\Big)\mathbf{G}^{\star}-\mathbf{H}\mathbf{\Sigma}_{\mathbf{s}}$, that is, the optimal solution in (\ref{eq:MMSE_G}).
\end{proof}

\begin{comment}
\label{cmt:1}
For the special case $\mathbf{W}=\mathbf{I}$, the conclusion in Lemma \ref{lem:Wiener_filter_generalized} results in the well-known Wiener filter. Hence Lemma \ref{lem:Wiener_filter_generalized} may be regarded as a generalized Wiener filtering result. The implication of this lemma is that, when the mean square error is weighted by a matrix $\mathbf{W}$, the Wiener filter preserves its optimality as long as the weight parameter $\mathbf{W}$ is positive definite.
\end{comment}

Given these two useful lemmas, we can now transform our objective function $\mathsf{MI}\big(\{\mathbf{F}_i\}_{i=1}^{L}\big)$ by introducing the notation
\begin{align}
\widetilde{\mathbf{H}}\triangleq\sum_{i=1}^L\mathbf{H}_i\mathbf{F}_i,
\end{align}
the notation in equation (\ref{eq:cov_matrix_n}), and a positive definite weight matrix $\mathbf{W}$:
\begin{align}
&\!\!\mathsf{MI}\Big(\{\mathbf{F}_i\}_{i=1}^L\Big)\!\!=\!\log\det\Big(\mathbf{I}_M+\widetilde{\mathbf{H}}\mathbf{\Sigma}_{\mathbf{s}}\widetilde{\mathbf{H}}^H\mathbf{\Sigma}_{\mathbf{n}}^{-1}\Big)\\
&\!\!\!\!\!\!\!\!\!\!=\!\log\det\Big(\big(\widetilde{\mathbf{H}}^H\mathbf{\Sigma}_{\mathbf{n}}^{-1}\widetilde{\mathbf{H}}\!+\!\mathbf{\Sigma}_{\mathbf{s}}^{-1}\big)\mathbf{\Sigma}_{\mathbf{s}}\Big) \\
&\!\!\!\!\!\!\!\!\!\!=\!-\!\log\det\Big(\widetilde{\mathbf{H}}^H\mathbf{\Sigma}_{\mathbf{n}}^{-1}\widetilde{\mathbf{H}}\!+\!\mathbf{\Sigma}_{\mathbf{s}}^{\!-\!1}\Big)^{\!\!-\!1}\!\!\!\!\!+\!\log\det\big(\mathbf{\Sigma}_{\mathbf{s}}\big) \\
&\!\!\!\!\!\!\!\!\!\!=\!\underset{\mathbf{W}\in\mathcal{H}^{K}_{++}}{\max.}\Big\{\log\det\big(\mathbf{W}\big)\!\!-\!\!\Tr\big\{\mathbf{W}\big(\widetilde{\mathbf{H}}^H\mathbf{\Sigma}_{\mathbf{n}}^{-1}\widetilde{\mathbf{H}}\!+\!\mathbf{\Sigma}_{\mathbf{s}}^{\!-\!1}\big)^{\!-\!1}\big\} \nonumber\\
&\qquad\qquad\!+\!\!K\Big\}\!+\!\log\det\big(\mathbf{\Sigma}_{\mathbf{s}}\big) \\
&\!\!\!\!\!\!\!\!\!\!=\!\underset{\stackrel{\mathbf{W}\in\mathcal{H}^{K}_{++},}{\mathbf{G}}}{\max.}\Big\{\log\det\big(\mathbf{W}\big)\!\!-\!\!\Tr\Big\{\mathbf{W}\big[\big(\mathbf{I}\!\!-\!\!\mathbf{G}^H\widetilde{\mathbf{H}}\big)\mathbf{\Sigma}_{\mathbf{s}}\big(\mathbf{I}\!\!-\!\!\mathbf{G}^H\widetilde{\mathbf{H}}\big)^H \nonumber\\
&\qquad\qquad\!+\!\mathbf{G}^H\mathbf{\Sigma}_{\mathbf{n}}\mathbf{G}\big]\Big\}\Big\}+\!\!K\!+\!\log\det\big(\mathbf{\Sigma}_{\mathbf{s}}\big),
\end{align}
where the last two steps follow from Lemma \ref{lem:WMMSE_derivative} and Lemma \ref{lem:Wiener_filter_generalized} respectively. 

We have thus transformed the MI optimization problem ($\mathsf{P}0$) to an equivalent problem ($\mathsf{P}1$) in (\ref{eq:opt_prob_MI_equivalent}) shown on the top of next page.
\begin{figure*}[!t]
\normalsize
\setcounter{TempEqnCnt}{\value{equation}}
\setcounter{equation}{21}
\begin{subequations}
\label{eq:opt_prob_MI_equivalent}
\begin{align}
(\mathsf{P}1)\underset{\stackrel{\mathbf{W}\in\mathcal{H}^{K}_{++},}{\{\mathbf{F}_i\}_{i=1}^L,\mathbf{G}}}{\max.}&\ \mathsf{MI}\Big(\!\{\!\mathbf{F}_i\!\}_{i=1}^{L},\!\mathbf{W},\!\mathbf{G}\!\Big)\!\label{eq:opt_obj_MI_equivalent}\\
&\!\!\!\!\!\!\!\!=\!\!\bigg\{\!\log\det\big(\!\mathbf{W}\!\big)\!\!-\!\!\Tr\bigg\{\!\mathbf{W}\!\Big[\!\Big(\mathbf{I}\!\!-\!\!\mathbf{G}^H\!\big(\sum_{i=1}^L\mathbf{H}_i\mathbf{F}_i\!\big)\!\Big)\mathbf{\Sigma}_{\mathbf{s}}\Big(\mathbf{I}\!\!-\!\!\mathbf{G}^H\!\big(\sum_{i=1}^L\mathbf{H}_i\mathbf{F}_i\!\big)\!\Big)^H\!\!\!+\!\!\mathbf{G}^H\mathbf{\Sigma}_{\mathbf{n}}\mathbf{G}\Big]\!\bigg\}\!\bigg\}\!\!+\!\!\log\det(\mathbf{\Sigma_{\mathbf{s}}})\!\!+\!\!K, \nonumber\\
&\mathrm{s.t.}\ \Tr\big\{\mathbf{F}_i\big(\mathbf{\Sigma}_{\mathbf{s}}\!\!+\!\!\mathbf{\Sigma}_{\mathbf{n}}\big)\mathbf{F}_i^H\big\}\!\leq\!P_i,\qquad\qquad i\in\{1,\cdots,L\}.\label{eq:opt_constr_MI_equivalent}
\end{align}
\end{subequations}
\setcounter{equation}{\value{TempEqnCnt}}
\hrulefill
\vspace*{4pt}
\end{figure*}
\setcounter{equation}{22}

We can now apply the idea of BCA and decompose the problem ($\mathsf{P}1$) into three subproblems that separately address the optimization of the set of precoders $\mathbf{F}_i$'s at the sensors, the weight matrix $\mathbf{W}$, and  the postcoder $\mathbf{G}$ at the FC. 
Applying the previous two lemmas, we readily obtain the closed-form optimal solutions to the latter two subproblems.


(i) When $\{\mathbf{F}_i\}_{i=1}^{L}$ and $\mathbf{G}$ are given, the optimal $\mathbf{W}^{\star}$ that achieves the maximal MI  in (\ref{eq:opt_prob_MI_equivalent}) is given by
\begin{align}
&\mathbf{W}^{\star}=\arg\underset{\mathbf{W}\in\mathcal{H}^{K}_{++}}{\max.}\mathsf{MI}\Big(\mathbf{W}\Big|\{\mathbf{F}_i\}_{i=1}^{L},\mathbf{G}\Big)\label{eq:optimal_W_MI}\\
&\!\!\!=\!\!\bigg[\Big(\mathbf{I}\!\!-\!\!\mathbf{G}^H\big(\sum_{i=1}^L\mathbf{H}_i\mathbf{F}_i\big)\Big)\mathbf{\Sigma}_{\mathbf{s}}\Big(\mathbf{I}\!\!-\!\!\mathbf{G}^H\big(\sum_{i=1}^L\mathbf{H}_i\mathbf{F}_i\big)\Big)^H\!\!\!\!+\!\mathbf{G}^H\mathbf{\Sigma}_{\mathbf{n}}\mathbf{G}\bigg]^{-1}.\nonumber
\end{align}

(ii) When $\{\mathbf{F}_i\}_{i=1}^{L}$ and $\mathbf{W}$ are given, the optimal $\mathbf{G}^{\star}$ that achieves the maximal MI  in (\ref{eq:opt_prob_MI_equivalent}) is given by
\begin{align}
\!\!&\mathbf{G}^{\star}=\arg\underset{\mathbf{G}}{\max.}\mathsf{MI}\Big(\mathbf{G}\Big|\{\mathbf{F}_i\}_{i=1}^{L},\mathbf{W}\Big)\label{eq:optimal_G_MI}\\
\!\!&\!\!=\!\Big[\big(\sum_{i=1}^L\mathbf{H}_i\mathbf{F}_i\big)\mathbf{\Sigma}_{\mathbf{s}}\big(\sum_{i=1}^L\mathbf{H}_i\mathbf{F}_i\big)^H\!\!+\!\!\mathbf{\Sigma}_{\mathbf{n}}\Big]^{-1}\!\!\Big(\sum_{i=1}^L\mathbf{H}_i\mathbf{F}_i\Big)\mathbf{\Sigma}_{\mathbf{s}},\nonumber
\end{align}
where $\mathbf{\Sigma}_{\mathbf{n}}$ is given in (\ref{eq:cov_matrix_n}).

We now address the first subproblem of optimizing $\{\mathbf{F}_i\}_{i=1}^{L}$ with $\mathbf{W}$ and $\mathbf{G}$ given. Towards this end, we have two options: we can jointly optimize $\{\mathbf{F}_i\}_{i=1}^L$ in one shot, or we can further apply BCA to partition the entire cohort of variables $\{\mathbf{F}_i\}_{i=1}^{L}$ into $L$ blocks, $\{\mathbf{F}_1\}$, $\cdots$, $\{\mathbf{F}_L\}$, and attack $L$ smaller problems one by one in a cyclic manner. For both options, solutions, hopefully in closed forms, are desirable and complexity is of critical importance.  Below we elaborate both approaches.


\subsection{Approach 1: Jointly Optimizing $\{\mathbf{F}_i\}_{i=1}^L$}
\label{subsec:3_block_BCD}

The subproblem of ($\mathsf{P1}$) maximizing $\mathsf{MI}\big(\{\mathbf{F}_i\}_{i=1}^L\big|\mathbf{W},\mathbf{G}\big)$ with $\mathbf{W}$ and $\mathbf{G}$ given is rewritten as follows
\begin{subequations}
\label{eq:opt_prob_P8}
\begin{align}
\!\!\!\!\!\!\!\!\!\!\!\!(\mathsf{P}2)\underset{\{\mathbf{F}_i\}_{i=1}^L}{\min.\!}&\!\Tr\bigg\{\!\mathbf{W}\Big[\Big(\mathbf{I}\!\!-\!\!\mathbf{G}^H\big(\sum_{i=1}^L\mathbf{H}_i\mathbf{F}_i\big)\Big)\mathbf{\Sigma}_{\mathbf{s}}\Big(\mathbf{I}\!\!-\!\!\mathbf{G}^H\big(\sum_{i=1}^L\mathbf{H}_i\mathbf{F}_i\big)\Big)^H\!\!\nonumber\\
&\qquad+\!\mathbf{G}^H\mathbf{\Sigma}_{\mathbf{n}}\mathbf{G}\Big]\bigg\}, \label{eq:opt_prob_P8}\\
\mathrm{s.t.}&\ \Tr\big\{\mathbf{F}_i\big(\mathbf{\Sigma}_{\mathbf{s}}\!\!+\!\!\mathbf{\Sigma}_{i}\big)\mathbf{F}_i^H\big\}\!\leq\!P_i,\ i\in\{1,\cdots,L\}.\label{eq:opt_constr_P8}
\end{align}
\end{subequations}

In this section, we consider solving ($\mathsf{P}2$) in a batch mode. 
We start by identifying the convexity of ($\mathsf{P}2$).


\begin{theorem}
\label{thm:convexity}
(Convexity)
The problem ($\it{P2}$) is convex.
\end{theorem}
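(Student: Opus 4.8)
The plan is to verify the two defining requirements of a convex program separately: that the objective in (\ref{eq:opt_prob_P8}) is a convex function of the tuple $\{\mathbf{F}_i\}_{i=1}^L$, and that each constraint in (\ref{eq:opt_constr_P8}) defines a convex set. Throughout I regard the complex matrices as points in the underlying real vector space (real and imaginary parts stacked), since convexity is a real notion. The single workhorse fact I would invoke repeatedly is that, for fixed matrices $\mathbf{A},\mathbf{B},\mathbf{C}$, the map $\mathbf{X}\mapsto\|\mathbf{A}\mathbf{X}\mathbf{B}+\mathbf{C}\|_F^2$ is convex, being the composition of the convex squared Frobenius norm with an affine map of $\mathbf{X}$; equivalently, one may mirror the argument in the proof of Lemma \ref{lem:Wiener_filter_generalized} and exhibit each quadratic piece as $vec^H(\cdot)\big[\,\cdot\,\otimes\,\cdot\,\big]vec(\cdot)$ with a positive semidefinite Kronecker-product kernel. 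I will also use that every positive semidefinite matrix admits a Hermitian square root.

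For the objective I would split $\Tr\{\mathbf{W}\mathbf{E}\}$ into its two summands. Writing $\widetilde{\mathbf{H}}=\sum_{i=1}^L\mathbf{H}_i\mathbf{F}_i$, which is linear in $\{\mathbf{F}_i\}$, the first summand equals $\big\|\mathbf{W}^{1/2}\big(\mathbf{I}-\mathbf{G}^H\widetilde{\mathbf{H}}\big)\mathbf{\Sigma}_{\mathbf{s}}^{1/2}\big\|_F^2$; since $\mathbf{I}-\mathbf{G}^H\widetilde{\mathbf{H}}$ is affine in $\{\mathbf{F}_i\}$, this term is convex. The summand requiring care is $\Tr\{\mathbf{W}\mathbf{G}^H\mathbf{\Sigma}_{\mathbf{n}}\mathbf{G}\}$, because $\mathbf{\Sigma}_{\mathbf{n}}$ in (\ref{eq:cov_matrix_n}) is itself a function of $\{\mathbf{F}_i\}$ and must not be mistaken for a constant. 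Substituting (\ref{eq:cov_matrix_n}) turns it into the constant $\sigma_0^2\Tr\{\mathbf{W}\mathbf{G}^H\mathbf{G}\}$ plus $\sum_{i=1}^L\big\|\mathbf{W}^{1/2}\mathbf{G}^H\mathbf{H}_i\mathbf{F}_i\mathbf{\Sigma}_i^{1/2}\big\|_F^2$, each summand of which is convex in $\mathbf{F}_i$. The objective is thus a finite sum of convex functions, hence convex.

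For the constraints I would rewrite the left-hand side of (\ref{eq:opt_constr_P8}) as $\big\|\mathbf{F}_i\big(\mathbf{\Sigma}_{\mathbf{s}}+\mathbf{\Sigma}_i\big)^{1/2}\big\|_F^2$, using that $\mathbf{\Sigma}_{\mathbf{s}}+\mathbf{\Sigma}_i\succ0$ possesses a Hermitian square root. This is convex in $\mathbf{F}_i$, so its $P_i$-sublevel set is convex, and the feasible region is the intersection of these $L$ convex sets. A convex objective over a convex feasible set establishes that ($\mathsf{P}2$) is convex.

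I expect the only genuine obstacle to be conceptual bookkeeping rather than computation: the one point that truly requires attention is recognizing that the compound noise covariance $\mathbf{\Sigma}_{\mathbf{n}}$ depends on the optimization variables through the term $\sum_{i=1}^L\mathbf{H}_i\mathbf{F}_i\mathbf{\Sigma}_i\mathbf{F}_i^H\mathbf{H}_i^H$ in (\ref{eq:cov_matrix_n}), so that $\mathbf{G}^H\mathbf{\Sigma}_{\mathbf{n}}\mathbf{G}$ contributes an additional quadratic-in-$\mathbf{F}_i$ term rather than a constant offset. Overlooking this would reduce the problem to a plain least-squares form and drop a legitimate convex summand. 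Everything else is routine, resting only on the existence of Hermitian square roots of positive semidefinite matrices and on the convexity of squared Frobenius norms composed with affine maps.
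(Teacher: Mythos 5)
Your proposal is correct and follows essentially the same route as the paper's proof: both decompose the objective (correctly accounting for the dependence of $\mathbf{\Sigma}_{\mathbf{n}}$ on $\{\mathbf{F}_i\}$) into a sum of quadratic forms in affine functions of the variables with positive semidefinite kernels, and treat each power constraint as a sublevel set of such a form. The only difference is presentational --- you phrase each piece as a squared Frobenius norm of an affine map, whereas the paper exhibits the positive semidefinite Kronecker-product kernel explicitly via the $vec$ identity, an equivalence you yourself note.
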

\begin{proof}
To start, consider the following function $f\big(\mathbf{X}\big):\mathbb{C}^{m\times n}\mapsto\mathbb{R}$:
\begin{align}
f\big(\mathbf{X}\big)\triangleq\Tr\big\{\mathbf{\Sigma}_1\mathbf{X}\mathbf{\Sigma}_2\mathbf{X}^H\big\}
\label{eq:ffff}
\end{align}
with constant matrices $\mathbf{\Sigma}_1$ and $\mathbf{\Sigma}_2$ being positive semi-definite and having appropriate dimensions. By the identity $\mathsf{Tr}\{\mathbf{ABCD}\}=vec^T(\mathbf{D}^T)\big[\mathbf{C}^T\otimes\mathbf{A}\big]vec(\mathbf{B})$,  $f\big(\mathbf{X}\big)$ can be equivalently written as
\begin{align}
f\big(\mathbf{X}\big)&=vec^H\big(\mathbf{X})\big[\mathbf{\Sigma}_2^{\ast}\otimes\mathbf{\Sigma}_1\big]vec\big(\mathbf{X}\big).\label{eq:f_vecform}
\end{align}
Since $\mathbf{\Sigma}_1$ and $\mathbf{\Sigma}_2$ are positive semi-definite, $[\mathbf{\Sigma}_1^{\ast}\otimes\mathbf{\Sigma}_2]$ is positive semi-definite.
Thus $f\big(\mathbf{X}\big)$ is actually a convex function with respect to $\mathbf{X}$. 

Now, we replace $\mathbf{X}$ with $\mathbf{X}=\sum_{i=1}^L\mathbf{H}_i\mathbf{F}_i$. Since $\sum_{i=1}^L\mathbf{H}_i\mathbf{F}_i$ is an affine (actually linear) transformation of variables $\{\mathbf{F}_i\}_{i=1}^L$, and affine operations preserve convexity, 
the following function 
\begin{align}
f\big(\{\mathbf{F}_i\}_{i=1}^L\big)=\Tr\Big\{\mathbf{\Sigma}_1\big(\sum_{i=1}^L\mathbf{H}_i\mathbf{F}_i\big)\mathbf{\Sigma}_2\big(\sum_{i=1}^L\mathbf{H}_i\mathbf{F}_i\big)^H\Big\}
\end{align} 
is therefore convex  with respect to the set of variables $\{\mathbf{F}_i\}_{i=1}^L$. 

Finally, to identify the convexity of the objective in (\ref{eq:opt_prob_P8}), it suffices to prove that the nonlinear terms of $\{\mathbf{F}_i\}_{i=1}^L$ are also convex. The nonlinear terms are  given by
\begin{align}
&\Tr\Big\{\big(\mathbf{G}\mathbf{W}\mathbf{G}^H\big)\big(\sum_{i=1}^L\mathbf{H}_i\mathbf{F}_i\big)\mathbf{\Sigma}_\mathbf{s}\big(\sum_{i=1}^L\mathbf{H}_i\mathbf{F}_i\big)^H\Big\}\nonumber\\
&\qquad\qquad+\sum_{i=1}^L\Tr\Big\{\big(\mathbf{H}_i^H\mathbf{G}\mathbf{W}\mathbf{G}^H\mathbf{H}_i\big)\mathbf{F}_i\mathbf{\Sigma}_i\mathbf{F}_i^H\Big\}.
\end{align}
Again, based on the discussion at the beginning of this proof (i.e. (\ref{eq:ffff}) is convex), we can see that each of above terms is convex and thus the entire objective is convex. Following the same line of reasoning, each power constraint function is also convex. Hence the problem ($\mathsf{P}2$) is convex. 
\end{proof}

The identification of its convexity enables us to reformulate the problem ($\mathsf{P}2$) into a standard SOCP form. To proceed, we introduce the following notations
\begin{subequations}
\label{eq:notation_ABCD_1}
\begin{align}
\mathbf{f}_i&\triangleq vec\big(\mathbf{F}_i\big);\\
\mathbf{g}&\triangleq vec\big(\mathbf{G}\big);\\
\mathbf{A}_{ij}&\triangleq\mathbf{\Sigma}^*_{\mathbf{s}}\otimes\Big(\mathbf{H}^H_{i}\mathbf{G}\mathbf{W}\mathbf{G}^H\mathbf{H}_j\Big);\\
\mathbf{B}_i&\triangleq\big(\mathbf{W}\mathbf{\Sigma}_{\mathbf{s}}\big)^{\ast}\otimes\mathbf{H}_i;\\
\mathbf{C}_i&\triangleq\mathbf{\Sigma}^*_i\otimes\Big(\mathbf{H}_i^H\mathbf{G}\mathbf{W}\mathbf{G}^H\mathbf{H}_i\Big);\\
%
\mathbf{f}&\triangleq\big[\mathbf{f}_1^T,\cdots,\mathbf{f}_L^T\big]^T;\\
\mathbf{A}&\triangleq\big[\mathbf{A}_{ij}\big]_{i,j=1}^L;\\
\mathbf{B}&\triangleq\big[\mathbf{B}_1,\cdots,\mathbf{B}_L\big];\\
\mathbf{C}&\triangleq \mathsf{Diag}\big\{\mathbf{C}_1,\cdots,\mathbf{C}_L\big\};\\
&\!\!\!\!\!\mathbf{D}_i\!\!\triangleq\!\!\mathsf{Diag}\Big\{\!\mathbf{O}_{K(\!\sum_{j\!=\!1}^{i\!-\!1}\!\!N_j\!)}\!,\!\big(\mathbf{\Sigma}_{\mathbf{s}}\!\!+\!\!\mathbf{\Sigma}_i\big)^*\!\!\!\otimes\!\mathbf{I}_{N_i}\!,\!\mathbf{O}_{K(\!\sum_{j\!=\!i\!+\!1}^{L}\!\!N_j\!)}\!\!\Big\}; \label{eq:DD_i}\\
c&\triangleq \Tr\big\{\mathbf{W}\mathbf{\Sigma}_{\mathbf{s}}\big\}+\sigma_0^2\Tr\big\{\mathbf{G}\mathbf{W}\mathbf{G}^H\big\}. 
\end{align}
\end{subequations}
Based on these notations, problem ($\mathsf{P}2$) can be equivalently written as the following quadratically constrained quadratic problem (QCQP) problem:
\begin{subequations}
\label{eq:convex_QCQP}
\begin{align}
\!\!\!\!\!\!(\mathsf{P}3):\underset{\mathbf{f}}{\min}&\ 
\mathbf{f}^H\big(\mathbf{A}\!+\!\mathbf{C}\big)\mathbf{f}\!-\!2\Real\big\{\mathbf{g}^H\mathbf{B}\mathbf{f}\big\}\!+\!c, \\
s.t.& \ \ \mathbf{f}^H\mathbf{D}_i\mathbf{f}\leq P_i, \ \ \ \ \ \ i\in\{1,\cdots, L\}. 
\end{align}
\end{subequations}

By Theorem \ref{thm:convexity}, ($\mathsf{P3}$) is convex, and hence $(\mathbf{A}\!+\!\mathbf{C})$ is positive semidefinite, which implies that its square root $(\mathbf{A}\!+\!\mathbf{C})^{\frac{1}{2}}$ exists. Problem ($\mathsf{P}3$) can therefore be further simplified to a standard SOCP form as:
\begin{subequations}
\label{eq:convex_SOCP}
\begin{align}
&\!\!\!\!\!\!\!\!\!\!\!\!\!\!\!\!(\mathsf{P}3_{SOCP}): \underset{\mathbf{f},t,s}{\min.} \ t, \\
&\!\!\!\!\!\!\!\!\!\!\!\!\mathrm{s.t.}\ s-2\mathsf{Re}\{\mathbf{g}^H\mathbf{B}\mathbf{f}\}+c\leq t; \\
&
\left\| 
\begin{array}{c}
(\mathbf{A}\!+\!\mathbf{C})^{\frac{1}{2}}\mathbf{f} \\
\frac{s-1}{2}
\end{array}
\right\|_2\leq \frac{s\!+\!1}{2};\label{eq:SOC_1} \\
&\ \ \ \ \ \ 
\left\| 
\begin{array}{c}
\mathbf{D}_i^{\frac{1}{2}}\mathbf{f} \\
\frac{P_i-1}{2}
\end{array}
\right\|_2\leq \frac{P_i\!+\!1}{2},\ \ i\in\{1,\cdots,L\}\label{eq:SOC_i}.
\end{align}
\end{subequations}
The above problem can then be solved relatively efficiently by standard numerical tools like CVX \cite{bib:CVX}.  The entire approach is summarized in Algorithm \ref{alg:3_block}.


\begin{algorithm}
\caption{Batch-Mode BCA Algorithm to solve ($\mathsf{P}0$)}
\label{alg:3_block}
\textbf{Initialization}: randomly generate feasible $\{\mathbf{F}_i^{(0)}\}_{i=1}^{L}$; obtain $\mathbf{G}^{(0)}$ by (\ref{eq:optimal_G_MI}); obtain $\mathbf{W}^{(0)}$ by (\ref{eq:optimal_W_MI})\;
\Repeat{increase of $\mathsf{MI}$ is sufficiently small or predefined number of iterations is reached}
{
 with $\mathbf{G}^{(j-1)}$ and $\mathbf{W}^{(j-1)}$ being fixed, solve ($\mathsf{P}3$) in (\ref{eq:convex_SOCP}), obtain $\{\mathbf{F}_i^{(j)}\}_{i=1}^{L}$\;
 with $\{\mathbf{F}_i^{(j)}\}_{i=1}^{L}$ and $\mathbf{W}^{(j-1)}$ being fixed, obtain $\mathbf{G}^{(j)}$ by (\ref{eq:optimal_G_MI})\;
 with $\{\mathbf{F}_i^{(j)}\}_{i=1}^{L}$ and $\mathbf{G}^{(j)}$ being fixed, obtain $\mathbf{W}^{(j)}$ by (\ref{eq:optimal_W_MI})\;
}
\end{algorithm}

\subsection{Approach 2: Cyclically Optimizing $\mathbf{F}_i$'s}
\label{subsec:multiple_block}

The method discussed in the previous subsection that jointly optimizes all the $\mathbf{F}_i$'s has the advantage of guaranteed convergence, but the need to use some numerical solver (such as the interior point method) can cause a fairly high computational complexity, especially when the number of sensors and/or antennae is large (detailed complexity analysis is performed later in Subsection \ref{subsec:complexity}). 
It is highly desirable to develop closed-form solutions to ($\mathsf{P}3$), which unfortunately seems intractable.  

In what follows, we will resort to the BCA method again to further decompose ($\mathsf{P}0$) into $L$ sub-blocks, each optimizing a single precoder $\mathbf{F}_i$, for $i=1,\cdots, L$:
\begin{subequations}
\begin{align}
(\mathsf{P}3_i)\ \underset{\mathbf{f}_i}{\min.}&\ \mathbf{f}_i^H\big(\mathbf{A}_{ii}\!\!+\!\!\mathbf{C}_i\big)\mathbf{f}_i-2\mathsf{Re}\big\{\big(\mathbf{g}^H\mathbf{B}_i\!-\!\mathbf{q}_i^H\big)\mathbf{f}_i\big\},\\
\mathrm{s.t.}&\ \mathbf{f}_i^H\mathbf{E}_i\mathbf{f}_i\leq P_i
\end{align}
\end{subequations}
where $\mathbf{q}_i\triangleq\sum_{j\neq i}\mathbf{A}_{ij}\mathbf{f}_j$ and  
$\mathbf{E}_i\triangleq\big(\mathbf{\Sigma}_{\mathbf{s}}\!\!+\!\!\mathbf{\Sigma}_i\big)^*\!\!\otimes\!\mathbf{I}_{N_i}$.
The problem ($\mathsf{P}3_i$) is a convex trust region subproblem, and to solve it efficiently, we first introduce the following general result.

\begin{theorem}
\label{thm:sol_Cvx_TRSP}
(Convex Trust Region Subproblem)

Consider the following convex trust region subproblem given in ($\mathsf{P}_{Cvx-TRSP}$):
\begin{subequations}
\begin{align}
(\mathsf{P}_{Cvx-TRSP})\ \underset{\mathbf{x}_i\in\mathbb{C}^n}{\min.}&\ \mathbf{x}^H\mathbf{Q}\mathbf{x}-2\mathsf{Re}\big\{\mathbf{q}^H\mathbf{x}\big\},\\
\mathrm{s.t.}&\ \|\mathbf{x}\|_2 \leq \rho, \label{eq:cvx_TRSP_constr}
\end{align}
\end{subequations}
where $\mathbf{Q}\in\mathcal{H}_+^n$. Suppose that $\mathbf{Q}$ has rank $r$ and  eigenvalue decomposition $\mathbf{Q}=\mathbf{U}\mathbf{\Lambda}\mathbf{U}^H$ with $\mathbf{\Lambda}$ having its diagonal elements $\{\lambda_i\}_{i=1}^{n}$ arranged in a descending order. Further denote $\mathbf{p}\triangleq\mathbf{U}^H\mathbf{q}$, with $p_i$ being the $i$-th element of $\mathbf{p}$. 
Then the optimal solution to ($\mathsf{P}_{Cvx-TRSP}$) is given as follows. If $\mathbf{q}\perp\mathcal{N}(\mathbf{Q})$ and $\sum_{k\!=\!1}^{r}|p_k|^2\lambda_k^{-2}\!\leq\!\rho^2$, one optimal solution to ($\mathsf{P}_{Cvx-TRSP}$) is given by $\mathbf{x}^{\star}\!=\!\mathbf{Q}^{\dagger}\mathbf{q}$, which has the minimum $l_2$-norm among all possible optimal solutions; Otherwise,  the solution to ($\mathsf{P}_{Cvx-TRSP}$) is unique and given by $\mathbf{x}^{\star}\!=\!\big(\mathbf{Q}\!+\!\mu^{\star}\mathbf{I}_n\big)^{-1}\mathbf{q}$,  where $\mu^{\star}$ is the unique positive solution to 
the equation $\sum_{k=1}^{n}|p_k|^2(\lambda_k+\mu)^{-2}=\rho^2$.
\end{theorem}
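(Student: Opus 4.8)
The plan is to attack $(\mathsf{P}_{Cvx-TRSP})$ through its Karush--Kuhn--Tucker (KKT) system, exploiting that the problem is convex (the objective is convex since $\mathbf{Q}\succeq0$, and the ball $\{\mathbf{x}:\|\mathbf{x}\|_2\le\rho\}$ is convex) and that Slater's condition holds trivially, as $\mathbf{x}=\mathbf{0}$ is strictly feasible whenever $\rho>0$. Consequently the KKT conditions are both necessary and sufficient for global optimality, so it suffices to enumerate all KKT points and decide which are feasible. Forming the Lagrangian $\mathcal{L}(\mathbf{x},\mu)=\mathbf{x}^H\mathbf{Q}\mathbf{x}-2\mathsf{Re}\{\mathbf{q}^H\mathbf{x}\}+\mu(\|\mathbf{x}\|_2^2-\rho^2)$ with multiplier $\mu\ge0$, the stationarity condition (via the Wirtinger derivative with respect to $\mathbf{x}^{\ast}$) reads $(\mathbf{Q}+\mu\mathbf{I}_n)\mathbf{x}=\mathbf{q}$, complemented by primal feasibility $\|\mathbf{x}\|_2\le\rho$, dual feasibility $\mu\ge0$, and complementary slackness $\mu(\|\mathbf{x}\|_2^2-\rho^2)=0$.

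I would then split according to complementary slackness. For $\mu=0$, stationarity becomes $\mathbf{Q}\mathbf{x}=\mathbf{q}$, which is consistent precisely when $\mathbf{q}\in\mathcal{R}(\mathbf{Q})$, i.e.\ $\mathbf{q}\perp\mathcal{N}(\mathbf{Q})$ (using $\mathcal{R}(\mathbf{Q})=\mathcal{N}(\mathbf{Q})^{\perp}$ for Hermitian $\mathbf{Q}$); the least-norm solution of this system is $\mathbf{x}^{\star}=\mathbf{Q}^{\dagger}\mathbf{q}$. Substituting the eigendecomposition and using $\mathbf{p}=\mathbf{U}^H\mathbf{q}$ gives $\|\mathbf{Q}^{\dagger}\mathbf{q}\|_2^2=\sum_{k=1}^{r}|p_k|^2\lambda_k^{-2}$, so feasibility of this branch is exactly the stated condition $\sum_{k=1}^{r}|p_k|^2\lambda_k^{-2}\le\rho^2$, which pins down Case~1. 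For the minimum-$l_2$-norm assertion I would observe that when $\mathbf{q}\perp\mathcal{N}(\mathbf{Q})$ the objective is invariant under adding any $\mathbf{v}\in\mathcal{N}(\mathbf{Q})$ (the terms involving $\mathbf{q}^H\mathbf{v}$ and $\mathbf{v}^H\mathbf{Q}$ vanish), so the optimal set is $\{\mathbf{Q}^{\dagger}\mathbf{q}+\mathbf{v}:\mathbf{v}\in\mathcal{N}(\mathbf{Q}),\ \|\mathbf{Q}^{\dagger}\mathbf{q}+\mathbf{v}\|_2\le\rho\}$; since $\mathbf{Q}^{\dagger}\mathbf{q}\in\mathcal{R}(\mathbf{Q})\perp\mathcal{N}(\mathbf{Q})$, the Pythagorean identity $\|\mathbf{Q}^{\dagger}\mathbf{q}+\mathbf{v}\|_2^2=\|\mathbf{Q}^{\dagger}\mathbf{q}\|_2^2+\|\mathbf{v}\|_2^2$ makes $\mathbf{Q}^{\dagger}\mathbf{q}$ the member of least norm.

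For $\mu>0$, complementary slackness forces the constraint active, $\|\mathbf{x}\|_2=\rho$, while $\mathbf{Q}+\mu\mathbf{I}_n\succ0$ is invertible, yielding the unique stationary point $\mathbf{x}^{\star}=(\mathbf{Q}+\mu\mathbf{I}_n)^{-1}\mathbf{q}$. Diagonalizing turns the active-constraint equation into the secular equation $\phi(\mu)\triangleq\sum_{k=1}^{n}|p_k|^2(\lambda_k+\mu)^{-2}=\rho^2$. The heart of the argument is existence and uniqueness of a positive root $\mu^{\star}$, obtained from monotonicity: $\phi$ is continuous and strictly decreasing on $(0,\infty)$ with $\phi(\mu)\to0$ as $\mu\to\infty$. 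Existence I would establish by treating the complement of Case~1: if $\mathbf{q}\not\perp\mathcal{N}(\mathbf{Q})$, some $p_k\ne0$ has $\lambda_k=0$, forcing $\phi(\mu)\to+\infty$ as $\mu\to0^{+}$; if instead $\mathbf{q}\perp\mathcal{N}(\mathbf{Q})$ but $\sum_{k=1}^{r}|p_k|^2\lambda_k^{-2}>\rho^2$, then $\phi(0^{+})=\sum_{k=1}^{r}|p_k|^2\lambda_k^{-2}>\rho^2$. In either sub-case the intermediate value theorem and strict monotonicity deliver a unique $\mu^{\star}>0$ with $\phi(\mu^{\star})=\rho^2$.

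I would close by deducing uniqueness of the optimizer in Case~2: the two cases are mutually exclusive and exhaustive, and in Case~2 the $\mu=0$ branch produces no feasible KKT point (either $\mathbf{Q}\mathbf{x}=\mathbf{q}$ is inconsistent or its least-norm solution violates $\|\mathbf{x}\|_2\le\rho$), so the only KKT point is $\mathbf{x}^{\star}=(\mathbf{Q}+\mu^{\star}\mathbf{I}_n)^{-1}\mathbf{q}$, which by sufficiency of KKT is the unique global minimizer. The main obstacle I anticipate is the bookkeeping forced by the rank-deficiency of $\mathbf{Q}$---in particular characterizing the full optimal set and the pseudoinverse solution in Case~1, and verifying the $\mu\to0^{+}$ limiting behavior of $\phi$ separately in the two sub-cases of Case~2; by comparison the Wirtinger-calculus stationarity step and the monotonicity of $\phi$ are routine.
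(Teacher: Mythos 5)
Your proposal is correct and follows essentially the same route as the paper's own proof: convexity plus Slater's condition to reduce to the KKT system, a case split on the multiplier $\mu$, the secular equation $\sum_k |p_k|^2(\lambda_k+\mu)^{-2}=\rho^2$ with strict monotonicity for the $\mu>0$ branch, and the null-space/pseudoinverse argument for the $\mu=0$ branch including the minimum-norm claim. No gaps.
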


\begin{proof}
Refer to Appendix \ref{subsec:appendix_thm_cvx_TRSP}.
\end{proof}

\begin{comment}
\label{cmt:sol_cvx_TRSP}
We have unveiled an important result in Theorem \ref{thm:sol_Cvx_TRSP}, which presents a significant enhancement of the standard solution to the general trust region subproblem, such as that presented in Section 4.3 of \cite{bib:NO}. Compared to the existing approaches in literature, where closed-form expression is missing and the solution is obtained through iterative numerical methods, the new result in Theorem \ref{thm:sol_Cvx_TRSP} has fully exploited the convexity of the objective, and is advantageous in two ways. First, the solution in Theorem \ref{thm:sol_Cvx_TRSP} has a much lower complexity, since it only requires a one-time eigenvalue decomposition with a complexity of $\mathcal{O}(n^3)$. 
Comparatively, standard methods (e.g. Algorithm 4.3 in \cite{bib:NO}) would run for many $\mathcal{O}(n^3)$) iterations, and each iteration needs one Cholesky decomposition and solving two linear systems, each of which takes a complexity $\mathcal{O}(n^3)$. Second, the new result clearly specifies the conditions for unique and multiple solutions, and in the presence of multiple solutions,  it gives out the one with the minimum $l_2$-norm,   which corresponds to the best power preserving solution. 
\end{comment}

Now return to our problem ($\mathsf{P}3_i$). By transforming the variable $\mathbf{f}_i$ into $\widetilde{\mathbf{f}}_i$ through $\mathbf{f}_i\!\triangleq\!\mathbf{E}_i^{-\!\frac{1}{2}}\widetilde{\mathbf{f}}_i$, and directly applying Theorem \ref{thm:sol_Cvx_TRSP}, we immediately get the solution to ($\mathsf{P}3_i$). 
A complete summary of this cyclic multi-block BCA algorithm is presented in Algorithm \ref{alg:multiple_block}.


The solution we just developed is highly efficient and works for general vector sources. In the special but important case of scalar sources (i.e. $K=1$), we can actually derive the fully analytical solution (without the need for  eigenvalue decomposition or bisection search). For notational consistency, in the scalar case, we rewrite the variables and parameters in problem ($\mathsf{P}3_i$) as
\begin{align}
\mathbf{W}&\rightarrow w;\ \mathbf{F}_i\rightarrow \mathbf{f}_i;\ \mathbf{G}\rightarrow \mathbf{g};\ \mathbf{\Sigma}_{\mathbf{s}}\rightarrow \sigma_s^2;\ \mathbf{\Sigma}_{i}\rightarrow \sigma_i^2.
\end{align}

By defining $\widetilde{\mathbf{q}}_i\triangleq \sum_{j\neq i}\mathbf{H}_j\mathbf{f}_j$, ignoring the terms independent of $\mathbf{f}_i$ and omitting the constant positive factor $w$ in the objective, we can rewrite the problem ($\mathsf{P}3_i$) as follows  
\begin{subequations}
\begin{align}
(\mathsf{P}4_i): \underset{\mathbf{f}_i}{\min.}&\ (\sigma_s^2+\sigma_i^2)\mathbf{f}_i^H\big(\mathbf{H}_i^H\mathbf{g}\mathbf{g}^H\mathbf{H}_i\big)\mathbf{f}_i\nonumber\\
&\quad\qquad-2\sigma_s^2\mathsf{Re}\big\{(1-\widetilde{\mathbf{q}}_i^H\mathbf{g})\mathbf{g}^H\mathbf{H}_i\mathbf{f}_i\big\} \\
\mathrm{s.t.}&\ \|\mathbf{f}_i\|^2\leq \frac{P_i}{\sigma_s^2+\sigma_i^2}\triangleq\bar{P}_i.
\end{align}
\end{subequations}


\begin{theorem}
\label{thm:scalar_case}
(Precoder $\mathbf{F}_i$ for Scalar Sources)
For scalar transmission ($\mathit{K=1}$), fully analytical optimal solution to problem ($\mathit{P4}_i$) can be obtained without invoking bisection search or eigenvalue decomposition as follows:

If $\sigma_s^4\big|1-\mathbf{g}^H\widetilde{\mathbf{q}}_i\big|^2>(\sigma_s^2+\sigma_i^2)^2\bar{P}_i\|\mathbf{H}_i^H\mathbf{g}\|^2_2$,
\begin{subequations}
\begin{align}
\!\!\!\!\!\!\!\!\mu_i^{\star}&\!=\!\sigma_s^2\bar{P}_i^{-\frac{1}{2}}\big|1\!-\!\mathbf{g}^H\widetilde{\mathbf{q}}_i\big|\big\|\mathbf{H}_i^H\mathbf{g}\big\|_2\!-\!(\sigma_s^2\!+\!\sigma_i^2)\big\|\mathbf{H}_i^H\mathbf{g}\big\|_2^2, \label{eq:opt_positive_mu_i_scalar_case}\\
\!\!\!\!\!\!\!\!\mathbf{f}_i^{\star}&\!=\!\sigma_s^2(1\!-\!\mathbf{g}^H\widetilde{\mathbf{q}}_i)\Big(\mu_i^{\star}\mathbf{I}\!+\!(\sigma_s^2\!+\!\sigma_i^2)\mathbf{H}_i^H\mathbf{g}\mathbf{g}^H\mathbf{H}_i\Big)^{-1}\!\!\mathbf{H}_i^H\mathbf{g}.
\end{align}
\end{subequations}
\indent Otherwise 
\begin{align}
\!\!\!\!\mathbf{f}_i^{\star}\!=\!\frac{\sigma_s^2(1\!-\!\mathbf{g}^H\widetilde{\mathbf{q}}_i)\mathbf{H}_i^H\mathbf{g}}{(\sigma_s^2\!+\!\sigma_i^2)\mathbf{g}^H\mathbf{H}_i\mathbf{H}_i^H\mathbf{g}}.
\end{align}
\end{theorem}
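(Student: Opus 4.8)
The plan is to recognize that $(\mathsf{P}4_i)$ is exactly an instance of the convex trust region subproblem $(\mathsf{P}_{Cvx\text{-}TRSP})$ of Theorem~\ref{thm:sol_Cvx_TRSP}, and then to exploit the fact that in the scalar case the quadratic coefficient matrix is rank one, which collapses the eigenvalue decomposition and the one-dimensional search of that theorem into fully closed-form expressions. First I would match the data: set $\mathbf{x}=\mathbf{f}_i$, identify the Hessian as $\mathbf{Q}=(\sigma_s^2+\sigma_i^2)\mathbf{H}_i^H\mathbf{g}\mathbf{g}^H\mathbf{H}_i$, the linear coefficient as $\mathbf{q}=\sigma_s^2(1-\mathbf{g}^H\widetilde{\mathbf{q}}_i)\mathbf{H}_i^H\mathbf{g}$ (taking care with the conjugate when reading $\mathbf{q}$ off from $\Real\{\mathbf{q}^H\mathbf{x}\}$), and the radius as $\rho=\sqrt{\bar{P}_i}$.

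Next I would make the rank-one structure explicit. Assuming $\mathbf{H}_i^H\mathbf{g}\neq\mathbf{0}$, $\mathbf{Q}$ has rank $r=1$ with single nonzero eigenvalue $\lambda_1=(\sigma_s^2+\sigma_i^2)\|\mathbf{H}_i^H\mathbf{g}\|_2^2$ and unit eigenvector $\mathbf{u}_1=\mathbf{H}_i^H\mathbf{g}/\|\mathbf{H}_i^H\mathbf{g}\|_2$. Since $\mathbf{q}$ is a scalar multiple of $\mathbf{H}_i^H\mathbf{g}$, it lies in $\mathcal{R}(\mathbf{Q})$, so the hypothesis $\mathbf{q}\perp\mathcal{N}(\mathbf{Q})$ of Theorem~\ref{thm:sol_Cvx_TRSP} is satisfied automatically. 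The projection $\mathbf{p}=\mathbf{U}^H\mathbf{q}$ then has a single nonzero entry $p_1=\mathbf{u}_1^H\mathbf{q}=\sigma_s^2(1-\mathbf{g}^H\widetilde{\mathbf{q}}_i)\|\mathbf{H}_i^H\mathbf{g}\|_2$, so that the sum $\sum_{k=1}^{r}|p_k|^2\lambda_k^{-2}$ reduces to the single term $|p_1|^2\lambda_1^{-2}$.

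With only one nonzero mode, both branches of Theorem~\ref{thm:sol_Cvx_TRSP} become explicit. In the branch $\sum_k|p_k|^2\lambda_k^{-2}\le\rho^2$ --- which, after substituting $p_1$, $\lambda_1$ and $\rho$, is precisely the complement of the displayed threshold $\sigma_s^4|1-\mathbf{g}^H\widetilde{\mathbf{q}}_i|^2>(\sigma_s^2+\sigma_i^2)^2\bar{P}_i\|\mathbf{H}_i^H\mathbf{g}\|_2^2$ --- the minimum-norm optimizer is $\mathbf{f}_i^\star=\mathbf{Q}^\dagger\mathbf{q}$, and I would evaluate the pseudoinverse of the rank-one $\mathbf{Q}$ directly to recover the stated ``Otherwise'' formula. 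In the complementary branch the defining equation $\sum_k|p_k|^2(\lambda_k+\mu)^{-2}=\rho^2$ reduces to the single relation $|p_1|^2(\lambda_1+\mu)^{-2}=\bar{P}_i$, whose unique positive root I would obtain in closed form by taking the positive square root; this yields $\mu_i^\star$ as in \eqref{eq:opt_positive_mu_i_scalar_case}, and substituting back into $(\mathbf{Q}+\mu_i^\star\mathbf{I})^{-1}\mathbf{q}$ gives $\mathbf{f}_i^\star$.

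The main thing to get right --- and the only real subtlety --- is the bookkeeping that ties the two cases together: I would verify that the inequality separating the two branches is algebraically identical to the condition $\mu_i^\star>0$, i.e. that requiring the right-hand side of \eqref{eq:opt_positive_mu_i_scalar_case} to be positive is equivalent to the strict inequality in the ``If'' clause. This consistency is what guarantees that the regularized solution is invoked precisely when the unconstrained minimum-norm point $\mathbf{Q}^\dagger\mathbf{q}$ would violate the power budget, and hence that the case split is exhaustive and well-defined. The remaining computations (the rank-one pseudoinverse evaluation and the square-root solve) are routine once the rank-one reduction is in place.
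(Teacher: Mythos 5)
Your proposal is correct and follows essentially the same route as the paper's own proof: both instantiate Theorem~\ref{thm:sol_Cvx_TRSP} with the rank-one matrix $\mathbf{Q}=(\sigma_s^2+\sigma_i^2)\mathbf{H}_i^H\mathbf{g}\mathbf{g}^H\mathbf{H}_i$, observe that $\mathbf{q}$ is a multiple of the single nonzero eigenvector so that $f(\mu)$ collapses to one term, and then read off the two branches in closed form. Your added check that positivity of $\mu_i^\star$ in \eqref{eq:opt_positive_mu_i_scalar_case} coincides with the stated threshold inequality is a nice explicit confirmation of the case split that the paper leaves implicit.
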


\begin{proof}
Refer to Appendix \ref{subsec:appendix_thm_scalar_case}.
\end{proof}


\begin{algorithm}
\caption{Cyclic Multi-block BCA Algorithm to Solve ($\mathsf{P}0$)}
\label{alg:multiple_block}
\textbf{Initialization}: randomly generate feasible $\{\mathbf{F}_i^{(0)}\}_{i=1}^{L}$; 
obtain $\mathbf{G}^{(0)}$ by (\ref{eq:optimal_G_MI}); obtain $\mathbf{W}^{(0)}$ by (\ref{eq:optimal_W_MI})\;
\Repeat{increase of $\mathsf{MI}$ is sufficiently small or predefined number of iterations is reached}
{ 
		\For{$i=1;\ i<=L;\ i++$}
		{ 
		\uIf{$K=1$}
		{   Utilize Theorem \ref{thm:scalar_case} to obtain $\mathbf{F}_i$\;
		}
		  \Else
		{   
		    Utilize Theorem \ref{thm:sol_Cvx_TRSP} to obtain $\mathbf{F}_i$\;
		}

		  update $\mathbf{G}^{}$ by (\ref{eq:optimal_G_MI}) \;
		  update $\mathbf{W}^{}$ by (\ref{eq:optimal_W_MI}) \; 
		}
	}

\end{algorithm}

\subsection{Convergence}
\label{subsec:convergence}

For the first batch-mode BCA algorithm developed previously, we also have a quite strong convergence result given in Theorem (\ref{thm:convergence_3BCA}), whose proof is detailed in the Appendix.


\begin{theorem}
\label{thm:convergence_3BCA} (Convergence) 
Suppose that the covariance matrix $\mathbf{\Sigma}_{\mathbf{s}}\succ0$ (i.e. signal has bounded energy). Algorithm \ref{alg:3_block} generates an increasing MI sequence. Its solution sequence has limit points, and each limit point of the solution sequence is a Karush-Kuhn-Tucker (KKT) point of the original problem ($\mathsf{P}0$).
\end{theorem}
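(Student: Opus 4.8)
The plan is to prove Theorem~\ref{thm:convergence_3BCA} in three stages: (1) the MI sequence is monotonically nondecreasing and bounded, hence convergent; (2) the solution sequence lies in a compact set and therefore has limit points; and (3) every limit point satisfies the KKT conditions of $(\mathsf{P}0)$. The monotonicity in stage~(1) is essentially built into the block-coordinate-ascent structure: at iteration $j$, Algorithm~\ref{alg:3_block} updates $\{\mathbf{F}_i\}$ by solving $(\mathsf{P}3)$ (which, being the convex reformulation of $(\mathsf{P}2)$ established in Theorem~\ref{thm:convexity}, is solved to global optimality), then updates $\mathbf{G}$ via the exact maximizer~\eqref{eq:optimal_G_MI}, then updates $\mathbf{W}$ via the exact maximizer~\eqref{eq:optimal_W_MI}. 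Since each of the three block updates can only increase (or preserve) the objective $\mathsf{MI}(\{\mathbf{F}_i\},\mathbf{W},\mathbf{G})$ of the equivalent problem $(\mathsf{P}1)$, and since this objective equals the original $\mathsf{MI}(\{\mathbf{F}_i\})$ after the $\mathbf{W}$ and $\mathbf{G}$ updates (by Lemmas~\ref{lem:WMMSE_derivative} and~\ref{lem:Wiener_filter_generalized}), the MI of $(\mathsf{P}0)$ is nondecreasing. Boundedness follows because the feasible region is compact and $\mathsf{MI}$ is continuous on it, so the increasing MI sequence converges.

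For stage~(2), I would note that the power constraint $\Tr\{\mathbf{F}_i(\mathbf{\Sigma}_{\mathbf{s}}+\mathbf{\Sigma}_i)\mathbf{F}_i^H\}\le P_i$ together with $\mathbf{\Sigma}_{\mathbf{s}}\succ0$ (the standing hypothesis) forces each $\mathbf{F}_i$ to lie in a bounded set, since $\mathbf{\Sigma}_{\mathbf{s}}+\mathbf{\Sigma}_i\succeq\mathbf{\Sigma}_{\mathbf{s}}\succ0$ gives a uniform lower bound on the quadratic form. Hence $\{\{\mathbf{F}_i^{(j)}\}\}_j$ lives in a compact set and admits a convergent subsequence by Bolzano--Weierstrass; the corresponding $\mathbf{G}^{(j)}$ and $\mathbf{W}^{(j)}$, being continuous functions of $\{\mathbf{F}_i^{(j)}\}$ through~\eqref{eq:optimal_G_MI} and~\eqref{eq:optimal_W_MI} (with $\mathbf{\Sigma}_{\mathbf{n}}$ bounded away from singularity because $\mathbf{\Sigma}_{\mathbf{n}}\succeq\sigma_0^2\mathbf{I}$), also converge along that subsequence, yielding a limit point $(\{\mathbf{F}_i^{\star}\},\mathbf{G}^{\star},\mathbf{W}^{\star})$.

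Stage~(3) is the crux and the main obstacle. The goal is to show that any limit point of $\{\mathbf{F}_i^{(j)}\}$ is a KKT point of $(\mathsf{P}0)$. The standard route is to argue that at the limit point, the block updates have become stationary: since the MI increments tend to zero along the convergent subsequence, the $\{\mathbf{F}_i\}$-update at the limit produces no further improvement, so the limit point is a global minimizer of the convex subproblem $(\mathsf{P}3)$ (equivalently $(\mathsf{P}2)$) evaluated at the limiting $\mathbf{W}^{\star}, \mathbf{G}^{\star}$. I would then invoke the exactness of the $\mathbf{W}$ and $\mathbf{G}$ updates to identify $\mathbf{W}^{\star}$ and $\mathbf{G}^{\star}$ as the optimal weight and postcoder associated with $\{\mathbf{F}_i^{\star}\}$ via~\eqref{eq:optimal_W_MI} and~\eqref{eq:optimal_G_MI}. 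The key algebraic step is to write down the KKT system of the convex subproblem $(\mathsf{P}3)$ (stationarity plus complementary slackness for the power constraints) and show that, after back-substituting the optimal $\mathbf{W}^{\star}=\mathbf{W}^{\star}(\{\mathbf{F}_i^{\star}\})$ and $\mathbf{G}^{\star}=\mathbf{G}^{\star}(\{\mathbf{F}_i^{\star}\})$, these coincide with the KKT stationarity conditions of the original $(\mathsf{P}0)$.

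The delicate point is that the WMMSE transformation must be shown to be \emph{stationarity-preserving}: one must verify that the gradient of the equivalent objective $\mathsf{MI}(\{\mathbf{F}_i\},\mathbf{W},\mathbf{G})$ with respect to $\{\mathbf{F}_i\}$, when evaluated at the jointly optimal $(\mathbf{W}^{\star},\mathbf{G}^{\star})$, equals the gradient of the true $\mathsf{MI}(\{\mathbf{F}_i\})$. This follows from the envelope theorem (or a direct chain-rule computation) once we observe that the partial derivatives of the equivalent objective with respect to $\mathbf{W}$ and $\mathbf{G}$ vanish at their optimal values, so the cross terms drop out. Combining this gradient identity with the complementary slackness inherited from the convex subproblem then establishes that $\{\mathbf{F}_i^{\star}\}$ with the induced multipliers satisfies the KKT conditions of $(\mathsf{P}0)$, completing the proof. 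I expect most of the difficulty to reside in making this envelope/gradient-matching argument rigorous while handling the possibly-singular second-order matrices $(\mathbf{A}+\mathbf{C})$ noted earlier in Comment~\ref{cmt:3_BCA}.
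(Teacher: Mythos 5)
Your proposal is correct and follows essentially the same route as the paper's proof: monotonicity of the MI sequence from exact block maximization, boundedness of $\{\mathbf{F}_i\}$ from the power constraints together with $\mathbf{\Sigma}_{\mathbf{s}}\succ0$ (plus Bolzano--Weierstrass and continuity of the $\mathbf{G}$- and $\mathbf{W}$-updates) for the existence of limit points, and then showing that any limit point globally solves the convex subproblem at the limiting $(\bar{\mathbf{W}},\bar{\mathbf{G}})$ so that its KKT system, after substituting the closed-form expressions for $\bar{\mathbf{W}}$ and $\bar{\mathbf{G}}$ in terms of $\{\bar{\mathbf{F}}_i\}$, coincides with the KKT conditions of $(\mathsf{P}0)$. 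The only stylistic difference is that you justify the final gradient-matching step via the envelope theorem, whereas the paper verifies the same fact by explicit matrix identities and a direct computation of $\partial\mathsf{MI}/\partial\mathbf{F}_i^{\ast}$; both are valid.
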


\begin{proof}
Refer to appendix \ref{subsec:appendix_thm_convergnece_3BCA}. 
\end{proof}

\begin{comment}
\label{cmt:convergence}
Convergence property is very important for BCA/BCD approaches. There exists example showing that inappropriate BCA/BCD algorithm can have its iterates always far from optimality and thus result in bad solutions. Convergence proof is usually challenging, and previous papers were able to make conjectures but not proofs (e.g. \cite{bib:MIMO_AF_Relay} \cite{bib:CWXing}). Theorem \ref{thm:convergence_3BCA} shows that by judiciously transforming the problem, we have developed an approach (Algorithm 1) to solve the hard nonconvex problem ($\mathsf{P}0$) which not only guarantees the convergence of the objective, but also the convergence of the solution. Note that although most of the BCA/BCD-based approaches in the beamformer literature expect a converging objective, convergence proof of the solution sequence is usually missing \cite{bib:Sensor_compress_1, bib:Sensor_compress_2, bib:FangLi_2, bib:Sensor_compress_3, bib:YangLiu_ICC, bib:sensor_network_Cui, bib:sensor_network_Hamid, bib:JunFang_MI, bib:YangLiu_ISIT14, bib:MIMO_AF_Relay, bib:CWXing, bib:TSTINR}.  
\end{comment}

For the cyclic multiple block approach, we are unable to analytically prove the convergence. However, extensive simulations demonstrate that they also converge, and, in some cases, takes fewer iterations to converge than the batch-mode approach (not to mention that each iteration requires considerably less complexity). 
  
\subsection{Complexity}
\label{subsec:complexity}

Complexity and speed is a critical performance indicator for nonconvex optimization. The original beamformer problem is decomposed to three subproblems. 

(i) The first two subproblems of optimizing $\mathbf{G}$ and $\mathbf{W}$ have closed-form solutions given in (\ref{eq:optimal_G_MI}) and (\ref{eq:optimal_W_MI}), and their complexities come primarily from matrix inversion which is $\mathcal{O}\big(K^3\big)$. 

(ii) For the third subproblem of optimizing $\{\mathbf{F}_i\}$, we have developed a batch-mode approach and a cyclic multi-block approach. 

The batch-mode BCA algorithm (Alg. \ref{alg:3_block}): Problem ($\mathsf{P}2$) is equivalently transformed to the SOCP problem ($\mathsf{P}3_{SOCP}$) formulated in (\ref{eq:convex_SOCP}) and, according to the SOCP complexity  listed in \cite{bib:IPM_Terlaky}, it has a complexity of $\mathcal{O}\Big(\sqrt{L}K^3(\sum_{i=1}^LN_i)^3\Big)$. This is also the complexity for each loop of the batch-mode BCA algorithm. 


The cyclic multi-block BCA algorithm (Alg. \ref{alg:multiple_block}): Problem ($\mathsf{P}3_i$) optimizing one beamformer $\mathbf{F}_i$ has its major complexity coming from the eigenvalue decomposition, which is $\mathcal{O}(K^3N_i^3)$. Thus the complexity for each loop is $\mathcal{O}\Big(K^3\sum_{i=1}^LN_i^3\Big)$. (The complexity here is considerably lower than  that of the 3-block BCA algorithm.)

The only two other methods that may be used to solve our problem, the WMMSE-SDR algorithm \cite{bib:MIMO_AF_Relay} and the TSTINR algorithm \cite{bib:TSTINR}, both operate in the multi-block mode and both have higher complexity. Since they both utilize semidefinite relaxation to update each beamformer, it can be shown from \cite{bib:ModernCVX} that the complexity for updating each beamformer is $\mathcal{O}(K^{3.5}N_i^{3.5})$. Thus the complexity for each outer loop, which sequentially updates every beamformer, is $\mathcal{O}\big(K^{3.5}(\sum_{i=1}^LN_i^{3.5})\big)$.   


To summarize, the proposed cyclic multi-block BCA approach  has considerably lower complexity than all of its peers. The proposed batch-mode BCA approach is efficient when the number of sensors ($L$) is modest\footnote{A particularly meaningful attribute of the batch-mode algorithm is that its convergence can be rigorously proven.}. These  complexity results are also verified via simulations. 


\section{Numerical Results}
\label{sec:numerical results}

We now provide simulation results to verify the effectiveness of our proposed solutions. 
Accounting for the possible correlation between antennae (which is related to their mutual distances), we set the covariance matrices of the source signal and of all the sensing noise as 
\begin{align}
\mathbf{\Sigma}_{\mathbf{s}}=\sigma_s^2\mathbf{\Sigma}_0, \ \ \  \mathbf{\Sigma}_i=\sigma_i^2\mathbf{\Sigma}_0,\ \ \ i\in\{1,\cdots,L\}, 
\end{align} 
where the $K\times K$ Toeplitz matrix $\mathbf{\Sigma}_0$ is given as $\big[\mathbf{\Sigma}_{0}\big]_{j,k}=\gamma^{|j-k|}$. 
The parameter $\gamma$ here is used to adjust the correlation level between different components of the signal or noise. In our experiments, $\gamma$ is set to $\gamma=0.5$. We define the sensing signal-to-noise-ratio (SNR) at the $i$-th sensor as $\mathsf{SNR}_i\triangleq\frac{\sigma_s^2}{\sigma_i^2}$ and the channel SNR as $\mathsf{SNR}\triangleq\frac{\sigma_s^2}{\sigma_0^2}$. For the implementation of the batch-mode BCA algorithm and the algorithms in \cite{bib:MIMO_AF_Relay} \cite{bib:TSTINR}, SDPT3 solver of CVX is used.

Figures \ref{fig:Gen_MI_Alg4Alg5_Diff_Iterations_test1} and \ref{fig:Gen_MI_Alg4Alg5_Diff_Iterations_test2} demonstrate the performances of the batch-mode BCA and the cyclic multi-block BCA for vector source signals. For comparison purpose, the WMMSE-SDR algorithm in \cite{bib:MIMO_AF_Relay} and the (average performance) of random beamformers with full-power-transmission (block dotted curve) are also plotted. Two network profiles are tested: a heterogeneous network where each sensor is different in term of the transmission power, the observation noise level and the numbers of antennae, and a homogeneous network where all the sensors are identically configured. 
To take into account the impact of channel parameters, at each specific channel SNR, we randomly generate 300 different channel realizations, and for each channel realization, our algorithms and the one in \cite{bib:MIMO_AF_Relay} are performed, all starting from the same initial solution that is randomly generated and normalized to satisfy the power constraints. The average mutual information (averaged over 300 channel realizations) with respect to the (outer-loop) iteration is plotted. In both network types, we see that the optimized beamformers perform tremendously better than the non-optimized random beamformers. We see that the algorithm in \cite{bib:MIMO_AF_Relay} has a per-iteration convergence almost identical to our cyclic multi-block BCA (but each iteration thereof has a much higher complexity)\footnote{The algorithm in \cite{bib:MIMO_AF_Relay} is also multi-block BCA based. The difference and advantage of Algorithm 2 proposed here is the derivation of near closely form solutions rather than that based on  numerical solvers.}. In all the cases, 40 to 50 iterations are sufficient to make the results converge and all algorithms deliver about the same amount of mutual information.

\begin{figure}
\centering
\includegraphics[height=2.8in,width=3.6in]{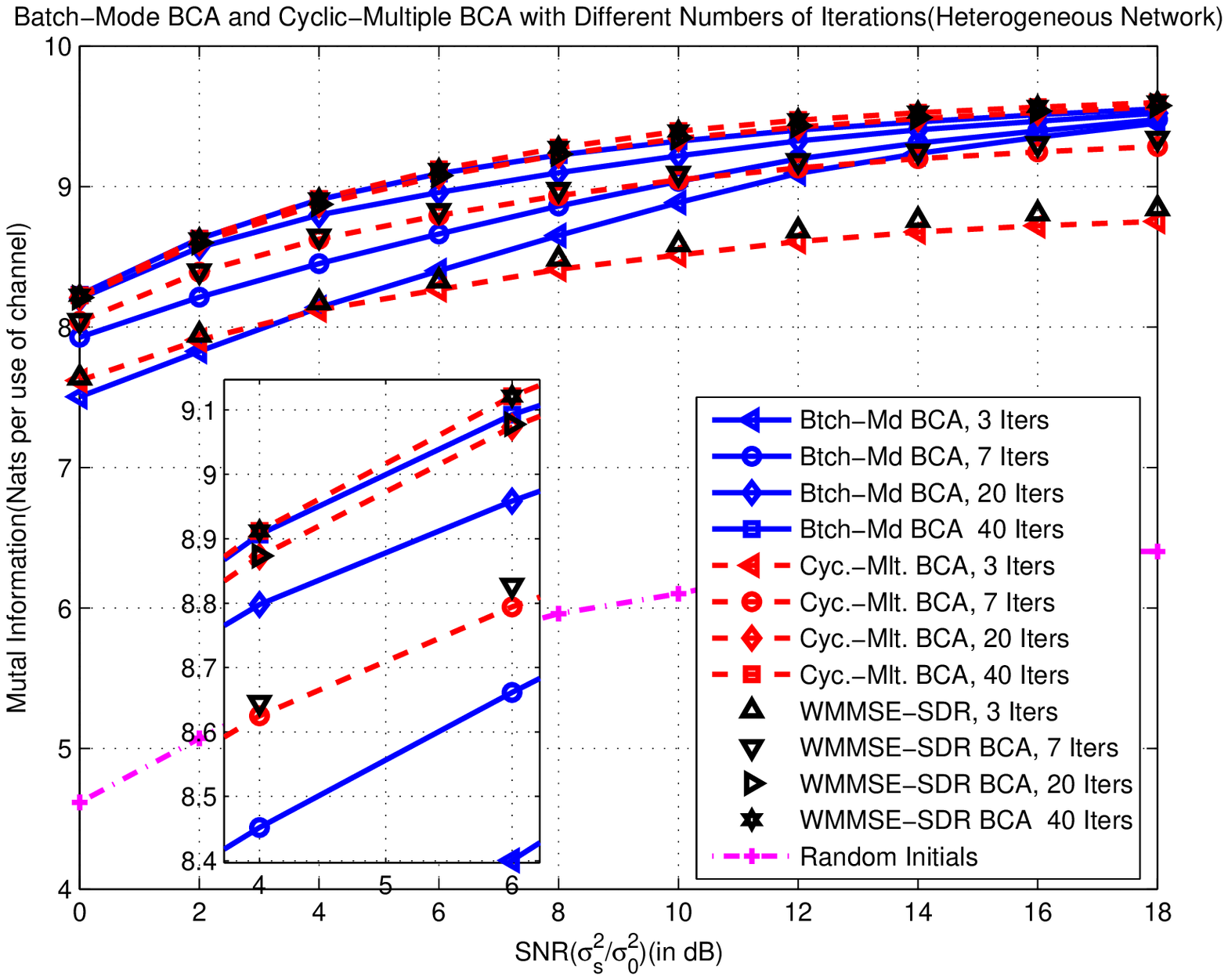}
\caption{Heterogeneous Network: batch-mode BCA and cyclic multi-block BCA. Source dimension $K\!=\!3$; Number of sensors $L\!=\!3$. Number of antennae at sensors and FC: $N_1\!=\!3$, $N_2\!=\!4$, $N_3\!=\!5$, $M\!=\!4$. Sensing SNR and transmit power constraints at sensors: 
$\mathsf{SNR}_1\!=\!8dB$, $\mathsf{SNR}_2\!=\!9dB$, $\mathsf{SNR}_3\!=\!10dB$,
$P_1\!=\!2$, $P_2\!=\!2$, $P_3\!=\!3$. The performance of WMMSE-SDR \cite{bib:MIMO_AF_Relay} virtually overlaps with our cyclic multi-block BCA.}
\label{fig:Gen_MI_Alg4Alg5_Diff_Iterations_test1} 
\end{figure}

\begin{figure}
\centering
\includegraphics[height=2.8in,width=3.6in]{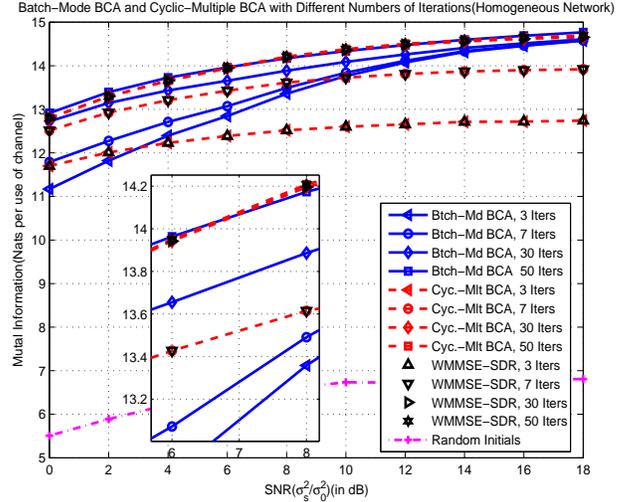}
\caption{Homogeneous Network: batch-mode BCA and cyclic multi-block BCA. Source dimension $K\!=\!4$; Number of sensors $L\!=\!5$. Number of antennae at sensors and FC: $N_i\!=\!5$, $M\!=\!4$. Sensing SNR and transmit power constraints at sensors: 
$\mathsf{SNR}_i\!=\!9dB$, $P_i\!=\!2$. The performance of WMMSE-SDR \cite{bib:MIMO_AF_Relay} virtually overlaps with our cyclic multi-block BCA.}
\label{fig:Gen_MI_Alg4Alg5_Diff_Iterations_test2} 
\end{figure}

Figure \ref{fig:Scalar_MI_Alg4Alg5_Diff_Iterations} tests the special case of scalar source signal ($K\!=\!1$), where the multi-block BCA algorithm has fully closed-form solution given in Theorem \ref{thm:scalar_case}. The observation is largely similar to that in the multi-dimension source case, but there is also subtle difference. Here we see that across the entire region of channel SNRs, the multi-block BCA algorithm appears to converge faster with the number of iterations (and each iteration is super fast with all the closed-form solutions). At low SNRs, it appears 10 and 30 iterations are good for cyclic multi-block BCA and batch-mode BCA respectively, and at high SNRs, less than 10 iterations are needed for both algorithms. The other interesting observation here is that even random beamformers can conveniently achieve the maximum MI provided that the channel SNR is sufficiently large.

\begin{figure}
\centering
\includegraphics[height=2.8in,width=3.6in]{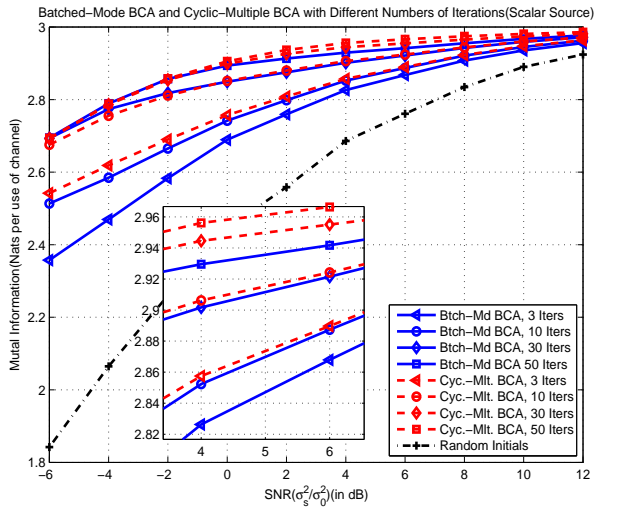}
\caption{Scalar Source ($K\!=\!1$) in a heterogeneous network: batch-mode BCA and cyclic multi-block BCA. Same configuration as in Fig. \ref{fig:Gen_MI_Alg4Alg5_Diff_Iterations_test1} except $K\!=\!1$.
}
\label{fig:Scalar_MI_Alg4Alg5_Diff_Iterations} 
\end{figure}

Figures \ref{fig:Gen_MI_Alg4Alg5_Diff_Initials_test1} and \ref{fig:Gen_MI_Alg4Alg5_Diff_Initials_test2} evaluate the impact of the random initials used to start the iterative BCA optimization for heterogeneous and homogeneous networks. 
We have tested a large number of random initials, each satisfying the power constraint with equality, and the observation is quiet consistent. To avoid messy plots, only 10 random initials are shown (the same 10 initials for both algorithms). We see that both of our algorithms are rather robust, exhibiting insensitivity to the  initial points, and capable of converging to identical performance.  

\begin{figure}
\centering
\includegraphics[height=2.8in,width=3.6in]{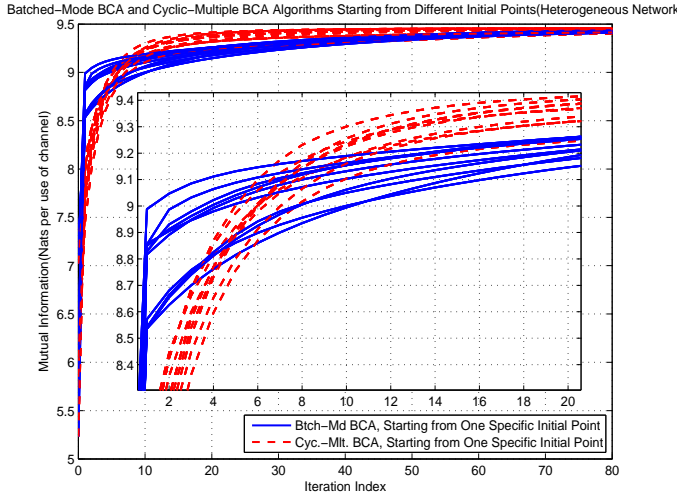}
\caption{Heterogeneous network : Impact of different random initial points. Same configuration as in Fig. \ref{fig:Gen_MI_Alg4Alg5_Diff_Iterations_test1}. Channel $\mathsf{SNR}\!=\!8dB$.}
\label{fig:Gen_MI_Alg4Alg5_Diff_Initials_test1} 
\end{figure}

\begin{figure}
\centering
\includegraphics[height=2.8in,width=3.6in]{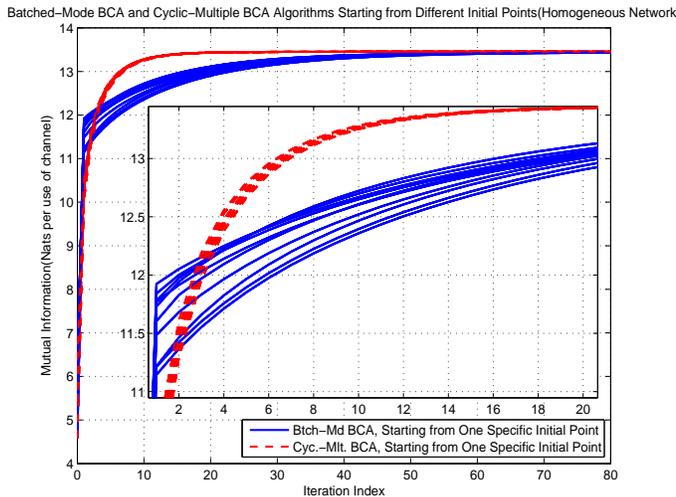}
\caption{
Homogeneous network: Impact of different random initial points. Channel $\mathsf{SNR}\!=\!4dB$. Same configuration as in Fig. \ref{fig:Gen_MI_Alg4Alg5_Diff_Iterations_test2}.}
\label{fig:Gen_MI_Alg4Alg5_Diff_Initials_test2} 
\end{figure}

We further compare our proposed algorithms with the two recent algorithms developed in \cite{bib:MIMO_AF_Relay}  and  \cite{bib:TSTINR}. Putting aside the differences in complexity (shown in Table \ref{tab:runningtime1}), our new algorithms and the WMMSE-based algorithm developed in \cite{bib:MIMO_AF_Relay} exhibit essentially the same performance (after 50 iterations). For the TSTINR algorithm developed in \cite{bib:TSTINR}, although it was mentioned in  \cite{bib:TSTINR} that it could be advantageous at high SNR, we have not observed it in our system model. As shown in Figure \ref{fig:comparisonTSTINR}, in both homogeneous and heterogeneous system setups, our proposed methods deliver consistently good performance for the entire channel SNR range. The TSTINR method falls far behind (even worse than random beamformers) at low SNRs, starts to catch up as channel quality improves, and eventually performs on par with our methods (rather than exceed them) at high SNRs.

\begin{figure}
\centering
\includegraphics[height=2.8in,width=3.6in]{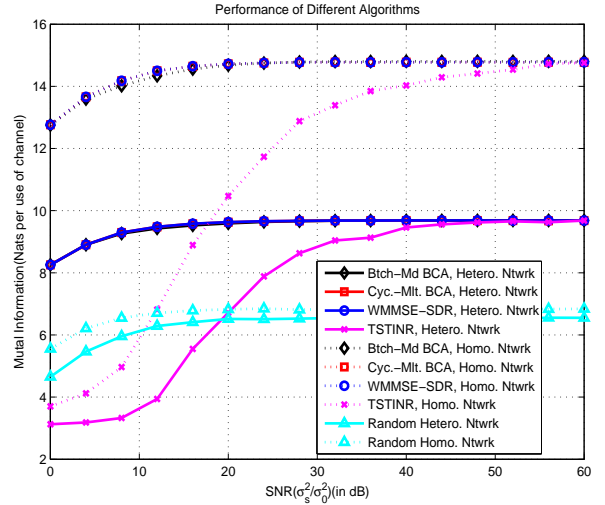}
\caption{
Performance: Our Algorithms vs. the WMMSE-SDR in \cite{bib:MIMO_AF_Relay} v.s. the TSTINR Algorithm in \cite{bib:TSTINR}}
\label{fig:comparisonTSTINR} 
\end{figure}

Finally, as an indication to complexity, we evaluate in Table \ref{tab:runningtime1} the average Matlab run-time of our algorithms and the existing ones \cite{bib:MIMO_AF_Relay} and \cite{bib:TSTINR}. A homogeneous sensing network is considered with different configurations of $K$, $L$ and $N_i$, which result in different sizes of the problem. SDPT3 solver of the CVX is used as the numerical solver when needed. As expected, the near closed-form solutions enable our cyclic mutli-block BCA algorithm to run extremely fast, taking only a tiny fraction of run-time compared to the existing approaches  \cite{bib:MIMO_AF_Relay}\cite{bib:TSTINR}. For relatively small networks, i.e. $K$ and/or $L$ is small, our batch-mode BCA algorithm (with guaranteed convergence) is also quite efficient.

\begin{table}
\caption{\small{MATLAB Run-Time Per Outer-Loop (in Sec.)}}
\centering
\begin{tabular}{|c|c|c|c|c|}
\hline
\backslashbox{Dim.}{$L$}& Algorithms &$L=5$&$L=10$&$L=20$ \\
\hline
$K=1$ & our Btch-Md & 0.2655 & 0.3956 & 0.7675  \\
\cline{2-5}
$M=2$ & our Cyc. Mlt. & 0.0125 & 0.0442 & 0.1765  \\
\cline{2-5}
$N_i=2$ & \scriptsize{WMMSE-SDR} \cite{bib:MIMO_AF_Relay} & 0.7037 & 1.475 & 3.244  \\
\cline{2-5}
        & TSTINR \cite{bib:TSTINR} & 0.7295 & 1.479 & 3.269  \\
\hline
$K=4$ & our Btch-Md & 0.5137 & 1.914 & 13.88  \\
\cline{2-5}
$M=4$ & our Cyc. Mlt. & 0.0328 & 0.1056 & 0.3818  \\
\cline{2-5}
$N_i=4$ & \scriptsize{WMMSE-SDR} \cite{bib:MIMO_AF_Relay} & 1.139 & 2.293 & 4.883  \\
\cline{2-5}
        & TSTINR \cite{bib:TSTINR} & 1.178 & 2.463 & 5.325  \\
\hline
$K=8$ & our Btch-Md & 10.01 & 80.98 & 668.0  \\
\cline{2-5}
$M=8$ & our Cyc. Mlt. & 0.0496 & 0.1431 & 0.4764  \\
\cline{2-5}
$N_i=8$ & \scriptsize{WMMSE-SDR} \cite{bib:MIMO_AF_Relay} & 5.217 & 10.31 & 20.73  \\
\cline{2-5}
        & TSTINR \cite{bib:TSTINR} & 4.951 & 10.17 & 20.71  \\
\hline
\end{tabular}\\
\vspace{0.03in}
Notes: SDPT3 solver of CVX is chosen to implement the algorithms in \cite{bib:MIMO_AF_Relay}, \cite{bib:TSTINR} and our batch-mode BCA. \\
\label{tab:runningtime1}
\end{table}

\section{Conclusion}
\label{sec:conclusion}

We have developed two new and efficient methods to optimize linear beamformers for a coherent MAC wireless sensor network with noisy observation and multiple MIMO sensors. 
Targeting maximizing the end-to-end mutual information, and leveraging the WMMSE and the BCA framework, we have developed a batch-mode approach and a cyclic multi-block approach.
Our batch-mode approach uses an SOCP formulation (which lends itself to relatively efficient numerical solvers), has a guaranteed convergence that we have rigorously proven, and, compared with the existing match-mode approaches, does not impose such stringent requirement of matrices being positive-definiteness.
Our cyclic multi-block approach further allows us to derive a nearly closed-form solution (up to a quick bisection search) to each block and hence renders extremely low complexity.  Extensive simulations are presented to demonstrate the effectiveness of the new approaches.


\appendix

\subsection{Proof of Theorem \ref{thm:sol_Cvx_TRSP}}
\label{subsec:appendix_thm_cvx_TRSP}

\begin{proof}
The problem ($\mathsf{P}_{Cvx-TRSP}$) is convex and Slater's condition is obviously satisfied. Thus it suffices to find the solution to its KKT conditions \cite{bib:CvxOpt}. By equivalently writing the constraint (\ref{eq:cvx_TRSP_constr}) in a squared-norm form and introducing its Lagrangian multiplier $\mu$, the KKT conditions are given as  
\begin{subequations}
\begin{align}
\big(\mathbf{Q}+\mu\mathbf{I}_n\big)\mathbf{x}
&=\mathbf{q};\label{eq:KKT_a}\\
\|\mathbf{x}\|^2_2&\leq \rho^2;\label{eq:KKT_b}\\
\mu\big(\|\mathbf{x}\|^2_2-\rho^2\big)&=0;\label{eq:KKT_c}\\
\mu &\geq0. \label{eq:KKT_d}
\end{align}\label{eq:KKT}
\end{subequations}

Let $f(\mu)\triangleq\sum_{k=1}^{n}|p_k|^2(\lambda_k+\mu)^{-2}$. 

First consider the case $\mu>0$. By the KKT conditions (\ref{eq:KKT_a}) and (\ref{eq:KKT_c}), we have
\begin{align}
\!\!\!\!\!\!\|\mathbf{x}\|_2^2\!&=\big\|\!\big(\mathbf{Q}\!+\!\mu\mathbf{I}_n\big)^{-1}\mathbf{q}\big\|_2^2\!\\
&=\!\mathbf{q}^H\!\mathbf{U}\big(\mathbf{\Lambda}\!+\!\mu\mathbf{I}_n\big)^{\!-\!2}\mathbf{U}^H\!\mathbf{q}\\
&=\sum_{k=1}^n|p_k|^2(\lambda_k\!+\!\mu)^{-2}\\
&=f(\mu)=\!\rho^2. \label{eq:KKT_analysis_1}
\end{align}
Since $f(\mu)$ is strictly decreasing with respect to $\mu$, in order for (\ref{eq:KKT_analysis_1}) to have a positive solution, the supremum of $f(\mu)$,  $\underset{\mu\rightarrow0}{\lim}f(\mu)$, must be larger than $\rho^2$. Only two specific cases will make this condition true: the first is $\mathbf{q}\not\perp\mathcal{N}(\mathbf{Q})$, which means $\exists i\in\{r\!+\!1,\cdots,n\}$ such that $p_i\neq0$; and the second is $\mathbf{q}\perp\mathcal{N}(\mathbf{Q})$ and $f(0)\!=\!\sum_{k=1}^r|p_k|^2\lambda_k^{-2}>\rho^2$ (where $r$ is the rank of $\mathbf{Q}$).  If either of the aforementioned cases holds, there exists a positive $\mu^{\star}$ satisfying (\ref{eq:KKT_analysis_1}) (which can be efficiently found via bisection search), and the unique solution to ($\mathsf{P}_{Cvx-TRSP}$) can therefore be derived as $\mathbf{x}^{\star}=\big(\mathbf{Q}\!+\!\mu\mathbf{I}_n)\big)^{-1}\mathbf{q}$. 

Next consider the case $\mu=0$. From the above discussion, at this time, $\mathbf{q}\perp\mathcal{N}(\mathbf{Q})$ and $\sum_{k=1}^r|p_k|^2\lambda_k^{-2}\leq\rho^2$.
From (\ref{eq:KKT_a}), we get
\begin{align}
\mathbf{Q}\mathbf{x}=\mathbf{q}.\label{eq:linear_sys}
\end{align}
Since $\mathbf{q}\perp\mathcal{N}(\mathbf{Q})$,  $\mathbf{q}$ belongs to the range space $\mathcal{R}(\mathbf{Q})$, which is the linear span of columns of $\mathbf{Q}$. 
By left multiplying $\mathbf{U}^H$ to both sides of (\ref{eq:linear_sys}), we obtain
\begin{align}
\mathbf{\Lambda}\mathbf{U}^H\mathbf{x}=\mathbf{p}.\label{eq:linear_sys_2}
\end{align}

Let $\bar{\mathbf{\Lambda}}$ be the top-left $r\times r$ sub-matrix of $\mathbf{\Lambda}$, $\bar{\mathbf{U}}$ be the left-most $r$ columns of $\mathbf{U}$, $\hat{\mathbf{U}}$ be the right-most $n\!-\!r$ columns of $\mathbf{U}$, and $\bar{\mathbf{p}}\triangleq[p_1,\cdots,p_r]^T$. We can then simplify (\ref{eq:linear_sys_2}) to
\begin{align}
\bar{\mathbf{\Lambda}}\bar{\mathbf{U}}^H\mathbf{x}=\bar{\mathbf{p}}.\label{eq:linear_sys_3}
\end{align}

Since $\mathbf{U}$ is unitary, we can represent $\mathbf{x}$ as a linear combination of its columns, i.e. $\mathbf{x}\triangleq\sum_{k=1}^{n}\alpha_k\mathbf{u}_k$. Recall the fact that $\mathcal{R}(\hat{\mathbf{U}})\!=\!\mathcal{N}(\mathcal{\mathbf{Q}})\perp\mathcal{R}(\mathcal{\mathbf{Q}})\!=\!\mathcal{R}(\bar{\mathbf{U}})$ and $\mathbf{q}\in\mathcal{R}(\mathbf{Q})$. This implies that the values $\{\alpha_k\}_{k=r\!+\!1}^n$ have no impact on the objective of the problem ($\mathsf{P}_{Cvx-TRSP}$). To obtain the solution with the minimal $l_2$-norm, $\{\alpha_k\}_{k=r\!+\!1}^n$ should therefore be set to zero. Now from (\ref{eq:linear_sys_3}), we  get 
\begin{align}
\mathbf{x}^{\star}=\bar{\mathbf{U}}\bar{\mathbf{\Lambda}}^{-1}\bar{\mathbf{p}}=\bar{\mathbf{U}}\bar{\mathbf{\Lambda}}^{-1}\bar{\mathbf{U}}^H\mathbf{q}=\mathbf{U}\mathbf{\Lambda}^{\dagger}\mathbf{U}^H\mathbf{q}=\mathbf{Q}^{\dagger}\mathbf{q},\label{eq:f_i_opt_mu_i0}
\end{align}

To verify that (\ref{eq:f_i_opt_mu_i0}) is indeed the solution, we need to test the KKT conditions. Obviously, $\mu_i=0$ naturally satisfies all the KKT conditions except (\ref{eq:KKT_b}). To verify (\ref{eq:KKT_b}), we utilize (\ref{eq:f_i_opt_mu_i0}) to obtain
\begin{align}
\|\mathbf{x}^{\star}\|_2^2=\bar{\mathbf{p}}^H\bar{\mathbf{\Lambda}}^{-1}\bar{\mathbf{U}}^H\bar{\mathbf{U}}\bar{\mathbf{\Lambda}}^{-1}\bar{\mathbf{p}}=\sum_{k=1}^{r}|p_{k}|^2\lambda_{k}^{-2}\leq\rho^2,
\label{eq:solution_case3}
\end{align}
where the above inequality follows from the fact that $\sum_{k=1}^r|p_i|^2\lambda_k^{-2}\leq\rho^2$ for $\mu=0$. Hence, all the KKT conditions are satisfied and the proof is thus complete.
\end{proof}

\subsection{Proof of Theorem \ref{thm:scalar_case}}
\label{subsec:appendix_thm_scalar_case}

\begin{proof}

We follow a similar outline of Theorem \ref{thm:sol_Cvx_TRSP} to solve ($\mathsf{P}4_i$). The key point leading to a closed-form solution is the fact that the quadratic matrix $\mathbf{H}_i^H\mathbf{g}\mathbf{g}^H\mathbf{H}_i$ has a rank of 1, i.e. $r_i=1$ in Theorem \ref{thm:sol_Cvx_TRSP}. Thus we obtain 
\begin{align}
&(\sigma_s^2+\sigma_i^2)\mathbf{H}_i^H\mathbf{g}\mathbf{g}^H\mathbf{H}_i\\
&=\mathbf{U}_i
\left[
\begin{array}{cc}
(\sigma_s^2+\sigma_i^2)\mathbf{g}^H\mathbf{H}_i\mathbf{H}_i^H\mathbf{g} & \mathbf{0}^H \\
\mathbf{0} & \mathbf{O}_{(KN_i-1)\times(KN_i-1)}
\end{array}
\right]
\mathbf{U}_i^H,\nonumber
\end{align}
with eigenvectors $\mathbf{U}_i\triangleq\big[\mathbf{u}_{i,1}, \mathbf{u}_{i,2},\cdots,\mathbf{u}_{i,N_i}\big]$ having its columns $\{\mathbf{u}_{i,j}\}_{j=1}^{N_i}$ satisfying the following properties
\begin{align}
\!\!\!\!\!\!\mathbf{u}_{i,1}\!=\!\frac{\mathbf{H}_i^H\mathbf{g}}{\|\mathbf{H}_i^H\mathbf{g}\|_2}, \ \mbox{and}\  \mathbf{u}_{i,j}^H\mathbf{H}_i^H\mathbf{g}\!=\! 0, \ \mbox{for\ } j\!=\!2,\cdots,N_i.
\end{align}
It can be readily checked that the projection of linear coefficient vector in the objective function along with $\mathbf{U}_i$, which we denote as $\mathbf{p}_i$ (and which is the counterpart of $\mathbf{p}$ in Theorem \ref{thm:sol_Cvx_TRSP}), is given by:
\begin{align}
\!\!\!\!\!\!\!\!\!\!\!\!p_{i,1}\!=\!\sigma_s^2(1-\mathbf{g}^H\widetilde{\mathbf{q}}_i)\|\mathbf{H}_i^H\mathbf{g}\|_2;\ p_{i,j}\!=\!0,j\!=\!\{2,\cdots,N_i\}.
\end{align}
At this time, the function $f(\mu_i)\triangleq\sum_{k=1}^{n}|p_k|^2(\lambda_k+\mu_i)^{-2}$  reduces to an elegant form
\begin{align}
f(\mu_i)=\frac{\sigma_s^4\big|1-\mathbf{g}^H\widetilde{\mathbf{q}}_i\big|^2\big\|\mathbf{H}_i^H\mathbf{g}\big\|^2_2}{\Big(\mu_i+(\sigma_s^2+\sigma_i^2)\mathbf{g}^H\mathbf{H}_i\mathbf{H}_i^H\mathbf{g}\Big)^2}.
\end{align}
Based on the above observations, the two cases for positive and zero $\mu_i^{\star}$ can be specified as follows:

If $\mu_i^{\star}>0$, then $\underset{\mu_i\rightarrow0}{\lim}f(\mu_i)>\bar{P}_i$, i.e. $\sigma_s^4\big|1-\mathbf{g}^H\widetilde{\mathbf{q}}_i\big|^2>(\sigma_s^2+\sigma_i^2)^2\bar{P}_i\|\mathbf{H}_i^H\mathbf{g}\|^2_2$. In this case, the optimal solution to ($\mathsf{P}4_i$) is given by
\begin{subequations}
\begin{align}
\!\!\!\!\!\!\!\!\mu_i^{\star}&\!=\!\sigma_s^2\bar{P}_i^{-\frac{1}{2}}\big|1\!-\!\mathbf{g}^H\widetilde{\mathbf{q}}_i\big|\big\|\mathbf{H}_i^H\mathbf{g}\big\|_2\!-\!(\sigma_s^2\!+\!\sigma_i^2)\big\|\mathbf{H}_i^H\mathbf{g}\big\|_2^2, \label{eq:opt_positive_mu_i_scalar_case}\\
\!\!\!\!\!\!\!\!\mathbf{f}_i^{\star}&\!=\!\sigma_s^2(1\!-\!\mathbf{g}^H\widetilde{\mathbf{q}}_i)\Big(\mu_i^{\star}\mathbf{I}\!+\!(\sigma_s^2\!+\!\sigma_i^2)\mathbf{H}_i^H\mathbf{g}\mathbf{g}^H\mathbf{H}_i\Big)^{-1}\!\!\mathbf{H}_i^H\mathbf{g}.\label{eq:opt_positive_f_i_scalar_case}
\end{align}
\end{subequations}


Otherwise, $\mu_i^{\star}=0$. This holds if and only if $\sigma_s^4\big|1\!-\!\mathbf{g}^H\widetilde{\mathbf{q}}_i\big|^2\!\leq\!(\sigma_s^2\!+\!\sigma_i^2)^2\bar{P}_i\|\mathbf{H}_i^H\mathbf{g}\|^2_2$ and the optimal $\mathbf{f}_i^{\star}$ is given by
\begin{align}
\!\!\!\!\mathbf{f}_i^{\star}\!=\!\frac{\sigma_s^2(1\!-\!\mathbf{g}^H\widetilde{\mathbf{q}}_i)\mathbf{H}_i^H\mathbf{g}}{(\sigma_s^2\!+\!\sigma_i^2)\mathbf{g}^H\mathbf{H}_i\mathbf{H}_i^H\mathbf{g}}.\label{eq:opt_zero_f_i_scalar_case}
\end{align}
Thus we have shown that, for scalar transmission, fully closed-form solution to ($\mathsf{P}4_i$) can be obtained without bisection search or eigenvalue decomposition.
\end{proof}

\subsection{Proof of Theorem \ref{thm:convergence_3BCA}}
\label{subsec:appendix_thm_convergnece_3BCA}

\begin{proof}
For each subproblem, we maximize the objective function with respect to a subset of variables with others being fixed, and hence the objective value (i.e. the MI sequence) should be non-decreasing as the iteration proceeds. 

Under the positive definiteness assumption of $\mathbf{\Sigma}_{\mathbf{s}}$, $\big(\mathbf{\Sigma}_{\mathbf{s}}\!\!+\!\!\mathbf{\Sigma}_i\big)\succ0$, we have,  $\forall i \in\{1,\cdots,L\} $, 
\begin{align}
\!\!\!\!\|\mathbf{F}_i\|^2_F\lambda_{\min}\big(\mathbf{\Sigma}_{\mathbf{s}}\!\!+\!\!\mathbf{\Sigma}_i\big)\leq\Tr\{\mathbf{F}_i\big(\mathbf{\Sigma}_{\mathbf{s}}\!\!+\!\!\mathbf{\Sigma}_i\big)\mathbf{F}_i^H\}\leq P_i,
\end{align}
where $\lambda_{\min}(\cdot)$ denotes the minimum eigenvalue of a Hermitian matrix. Since $\lambda_{\min}\big(\mathbf{\Sigma}_{\mathbf{s}}\!\!+\!\!\mathbf{\Sigma}_i\big)\!>\!0$, $\|\mathbf{F}_i\|^2_F$ is finite for all $i$, and the variable $\{\mathbf{F}_i\}_{i=1}^L$ is bounded. By Bolzano-Weierstrass Theorem, there exits a subsequence $\{k_j\}_{j=1}^{\infty}$ such that $\{\mathbf{F}_i^{(k_j)}\}_{i=1}^L$ converges. Since $\mathbf{G}$ and $\mathbf{W}$ are updated by continuous functions of $\{\mathbf{F}_i\}_{i=1}^L$ in (\ref{eq:optimal_G_MI}) and (\ref{eq:optimal_W_MI}), $\big(\{\mathbf{F}_i^{(k_j)}\}_{i=1}^L, \mathbf{W}^{(k_j)}, \mathbf{G}^{(k_j)}\big)$ converges. This proves the existence of limit points in the solution sequence.

Let $\big(\{\bar{\mathbf{F}}_i\}_{i=1}^L, \bar{\mathbf{W}}, \bar{\mathbf{G}}\big)$ be an arbitrary limit point of $\big(\{\mathbf{F}_i^{(k)}\}_{i=1}^L, \mathbf{W}^{(k)}, \mathbf{G}^{(k)}\big)$. There exists a subsequence $\{k_j\}$ such that $\big(\{\mathbf{F}_i^{(k_j)}\}_{i=1}^L, \mathbf{W}^{(k_j)}, \mathbf{G}^{(k_j)}\big)\stackrel{j\rightarrow\infty}{\longrightarrow} \big(\{\bar{\mathbf{F}}_i\}_{i=1}^L, \bar{\mathbf{W}}, \bar{\mathbf{G}}\big)$. Since $\{\mathbf{F}_i^{(k)}\}_{i=1}^L$ is bounded, by possibly restricting to a subsequence, we can assume that $\big(\{\mathbf{F}_i^{(k_j+1)}\}_{i=1}^L\big)$ converges to a limit $\big(\big\{\widehat{\mathbf{F}}_i\big\}_{i=1}^L\big)$. 

Note that for each $j$, $\{\mathbf{F}_i^{(k_j+1)}\}_{i=1}^L$ are feasible, i.e.
\begin{align}
\!\!\!\!\!\!\Tr\{\mathbf{F}_i^{(k_j+1)}\big(\mathbf{\Sigma}_{\mathbf{s}}\!\!+\!\!\mathbf{\Sigma}_i\big)\big(\mathbf{F}_i^{(k_j+1)}\big)^H)\}\leq P_i, i\in\{1,\cdots,L\}.
\end{align}
By taking $j\rightarrow\infty$ in the above inequalities, we obtain 
\begin{align}
\Tr\{\widehat{\mathbf{F}}_i\big(\mathbf{\Sigma}_{\mathbf{s}}\!\!+\!\!\mathbf{\Sigma}_i\big)\widehat{\mathbf{F}}_i^H)\}\leq P_i, \ \ i\in\{1,\cdots,L\}.
\end{align}
Hence $\big\{\widehat{\mathbf{F}}_i\big\}_{i=1}^L$ are feasible.

For any feasible $\big\{\mathbf{F}_i\big\}_{i=1}^L$, we have 
\begin{align}
\!\!\!\!\!\!\!\!\mathsf{MI}\big(\{\mathbf{F}_i\}_{i=1}^L\big|\mathbf{W}^{(\!k_j\!)}\!,\!\mathbf{G}^{(\!k_j\!)}\big)\!\leq\!\mathsf{MI}\big(\{\mathbf{F}_i^{(\!k_j\!+\!1\!)}\}_{i=1}^L\big|\mathbf{W}^{(\!k_j\!)}\!,\!\mathbf{G}^{(\!k_j\!)}\big).
\end{align}
Noticing that the MI function is continuous and taking $j\rightarrow\infty$ in the above equation, we obtain
\begin{align}
\mathsf{MI}\big(\{\mathbf{F}_i\}_{i=1}^L\big|\bar{\mathbf{W}},\bar{\mathbf{G}}\big)\leq\mathsf{MI}\big(\{\widehat{\mathbf{F}}_i\}_{i=1}^L\big|\bar{\mathbf{W}},\bar{\mathbf{G}}\big),
\end{align}
for any feasible $\{\mathbf{F}_i\}_{i=1}^L$. 

Now that the results $\{\mathbf{F}_i^{(k)}\}_{i=1}^L$ generated by Algorithm \ref{alg:3_block} are feasible, by continuity of power constraint functions, $\{\bar{\mathbf{F}}_i\}_{i=1}^L$ are feasible. Thus we have
\begin{align}\label{eq:ineq_to_eq_1}
\mathsf{MI}\big(\{\bar{\mathbf{F}}_i\}_{i=1}^L\big|\bar{\mathbf{W}},\bar{\mathbf{G}}\big)\leq\mathsf{MI}\big(\{\widehat{\mathbf{F}}_i\}_{i=1}^L\big|\bar{\mathbf{W}},\bar{\mathbf{G}}\big).
\end{align}
At the same time, since the MI sequence is non-decreasing and $\big(\{\bar{\mathbf{F}}_i\}_{i=1}^L, \bar{\mathbf{W}}, \bar{\mathbf{G}}\big)$ is a limit point of the solution sequence,  
\begin{align}\label{eq:ineq_to_eq_2}
\mathsf{MI}\big(\{\bar{\mathbf{F}}_i\}_{i=1}^L\big|\bar{\mathbf{W}},\bar{\mathbf{G}}\big)\geq\mathsf{MI}\big(\{\mathbf{F}_i^{(k)}\}_{i=1}^L\big|\bar{\mathbf{W}},\bar{\mathbf{G}}\big),
\end{align}
for any integer $k$. Substituting $k$ with $k_j$ in (\ref{eq:ineq_to_eq_2}), taking limit $j\rightarrow\infty$ and combining it with (\ref{eq:ineq_to_eq_1}), we can see that $\{\bar{\mathbf{F}}_i\}_{i=1}^L$ is actually an optimal solution to the problem ($\mathsf{P}2$) with parameters $\bar{\mathbf{W}}$ and $\bar{\mathbf{G}}$. Hence, $\{\bar{\mathbf{F}}_i\}_{i=1}^L$ satisfies the KKT conditions of ($\mathsf{P}2$) with parameters $\bar{\mathbf{W}}$ and $\bar{\mathbf{G}}$ given, as shown in (\ref{eq:KKT_P8}) on the top of next page for $i\in\{1,\cdots,L\}$.
\begin{figure*}[!t]
\normalsize
\setcounter{TempEqnCnt}{\value{equation}}
\setcounter{equation}{62}
\begin{subequations}
\label{eq:KKT_P8}
\begin{align}
\!\!\!\!\!\!\!\!-\mathbf{H}_i^H\bar{\mathbf{G}}\bar{\mathbf{W}}\Big(\mathbf{I}\!-\!\bar{\mathbf{G}}^H\big(\sum_{i=1}^L\mathbf{H}_i\bar{\mathbf{F}}_i\big)\Big)\mathbf{\Sigma}_{\mathbf{s}}\!+\!\mathbf{H}_i^H\bar{\mathbf{G}}\bar{\mathbf{W}}\bar{\mathbf{G}}^H\mathbf{H}_i\bar{\mathbf{F}}_i\mathbf{\Sigma}_i\!+\!\lambda_i\bar{\mathbf{F}}_i\big(\mathbf{\Sigma}_{\mathbf{s}}\!\!+\!\!\mathbf{\Sigma}_i\big)&=\mathbf{O}; \label{eq:KKT_P8_1}\\
\!\!\!\!\!\!\!\!\lambda_i\Big(\Tr\Big\{\bar{\mathbf{F}}_i\big(\mathbf{\Sigma}_{\mathbf{s}}\!\!+\!\!\mathbf{\Sigma}_i\big)\bar{\mathbf{F}}_i^H\Big\}-P_i\Big)&=0;\label{eq:KKT_P8_2}\\
\!\!\!\!\!\!\!\!\Tr\Big\{\bar{\mathbf{F}}_i\big(\mathbf{\Sigma}_{\mathbf{s}}\!\!+\!\!\mathbf{\Sigma}_i\big)\bar{\mathbf{F}}_i^H\Big\}&\leq P_i;\label{eq:KKT_P8_3}\\
\!\!\!\!\!\!\!\!\lambda_i&\geq 0.\label{eq:KKT_P8_4}
\end{align}
\end{subequations}
\setcounter{equation}{\value{TempEqnCnt}}
\hrulefill
\vspace*{4pt}
\end{figure*}
\setcounter{equation}{63}

To simplify the following exposition, we introduce two notations:
\begin{subequations}
\begin{align}
\bar{\mathbf{H}}&\triangleq\sum_{i=1}^L\mathbf{H}_i\bar{\mathbf{F}}_i;\\
\bar{\mathbf{\Sigma}}_{\mathbf{n}}&\triangleq\sigma_0^2\mathbf{I}+\sum_{i=1}^L\mathbf{H}_i\bar{\mathbf{F}}_i\mathbf{\Sigma}_i\bar{\mathbf{F}}_i^H\mathbf{H}_i^H. 
\end{align}
\end{subequations}

According to the update step in Algorithm \ref{alg:3_block}, the limit points $\bar{\mathbf{W}}$ and $\bar{\mathbf{G}}$ are related to $\big\{\bar{\mathbf{F}}_i\big\}_{i=1}^L$ in the following way:
\begin{subequations}
\label{eq:relation_GW_F}
\begin{align}
\bar{\mathbf{G}}&=\big[\bar{\mathbf{H}}\mathbf{\Sigma}_{\mathbf{s}}\bar{\mathbf{H}}^H\!\!+\!\!\bar{\mathbf{\Sigma}}_{\mathbf{n}}\big]^{-1}\bar{\mathbf{H}}\mathbf{\Sigma}_{\mathbf{s}}, \\
\bar{\mathbf{W}}&=\bar{\mathbf{H}}^H\bar{\mathbf{\Sigma}}_{\mathbf{n}}^{-1}\bar{\mathbf{H}}+\mathbf{\Sigma}_{\mathbf{s}}^{-1}.
\end{align} 
\end{subequations}

From (\ref{eq:relation_GW_F}) we can get two identities, shown in 
(\ref{eq:KKT_condition_1st_term}) and (\ref{eq:KKT_condition_2nd_term}) respectively in next page. 
\begin{figure*}[!t]
\normalsize
\setcounter{TempEqnCnt}{\value{equation}}
\setcounter{equation}{65}
\begin{align}
\!\!\!\!&\bar{\mathbf{G}}\bar{\mathbf{W}}\Big(\mathbf{I}\!\!-\!\!\bar{\mathbf{G}}^H\bar{\mathbf{H}}\Big)\!=\!\big(\bar{\mathbf{H}}\mathbf{\Sigma}_{\mathbf{s}}\bar{\mathbf{H}}^H\!\!+\!\!\bar{\mathbf{\Sigma}}_{\mathbf{n}}\big)^{-1}\bar{\mathbf{H}}\mathbf{\Sigma}_{\mathbf{s}}\big(\bar{\mathbf{H}}^H\bar{\mathbf{\Sigma}}_{\mathbf{n}}^{-1}\bar{\mathbf{H}}\!\!+\!\!\mathbf{\Sigma}_{\mathbf{s}}^{-1}\big)\Big(\mathbf{I}\!\!-\!\!\mathbf{\Sigma}_{\mathbf{s}}\bar{\mathbf{H}}^H\big(\bar{\mathbf{H}}\mathbf{\Sigma}_{\mathbf{s}}\bar{\mathbf{H}}^H\!\!+\!\!\bar{\mathbf{\Sigma}}_{\mathbf{n}}\big)^{-1}\bar{\mathbf{H}}\Big) \nonumber\\
\!\!\!\!&\!=\!\big(\bar{\mathbf{H}}\mathbf{\Sigma}_{\mathbf{s}}\bar{\mathbf{H}}^H\!\!+\!\!\bar{\mathbf{\Sigma}}_{\mathbf{n}}\big)^{-1}\bar{\mathbf{H}}\nonumber\\
\!\!\!\!&\qquad\!\!+\!\!\big(\bar{\mathbf{H}}\mathbf{\Sigma}_{\mathbf{s}}\bar{\mathbf{H}}^H\!\!+\!\!\bar{\mathbf{\Sigma}}_{\mathbf{n}}\big)^{-1}\bar{\mathbf{H}}\mathbf{\Sigma}_{\mathbf{s}}\Big[\underbrace{\bar{\mathbf{H}}^H\bar{\mathbf{\Sigma}}_{\mathbf{n}}^{-1}\big(\bar{\mathbf{H}}\mathbf{\Sigma}_{\mathbf{s}}\bar{\mathbf{H}}^H\!\!+\!\!\bar{\mathbf{\Sigma}}_{\mathbf{n}}\big)\!\!-\!\!\big(\bar{\mathbf{H}}^H\bar{\mathbf{\Sigma}}_{\mathbf{n}}^{-1}\bar{\mathbf{H}}\!\!+\!\!\mathbf{\Sigma}_{\mathbf{s}}^{-1}\big)\mathbf{\Sigma}_{\mathbf{s}}\bar{\mathbf{H}}^H}_{=\mathbf{O}}\Big]\big(\bar{\mathbf{H}}\mathbf{\Sigma}_{\mathbf{s}}\bar{\mathbf{H}}^H\!\!+\!\!\bar{\mathbf{\Sigma}}_{\mathbf{n}}\big)^{-1}\bar{\mathbf{H}}\nonumber\\
\!\!\!\!&\!=\!\big(\bar{\mathbf{H}}\mathbf{\Sigma}_{\mathbf{s}}\bar{\mathbf{H}}^H\!\!+\!\!\bar{\mathbf{\Sigma}}_{\mathbf{n}}\big)^{-1}\bar{\mathbf{H}}\label{eq:KKT_condition_1st_term}
\end{align} 
\setcounter{equation}{\value{TempEqnCnt}}
\hrulefill
\vspace*{4pt}
\end{figure*}
\setcounter{equation}{66}

\begin{figure*}[!t]
\normalsize
\setcounter{TempEqnCnt}{\value{equation}}
\setcounter{equation}{66}
\begin{align}
\bar{\mathbf{G}}\bar{\mathbf{W}}\bar{\mathbf{G}}^H&\!=\!\big(\bar{\mathbf{H}}\mathbf{\Sigma}_{\mathbf{s}}\bar{\mathbf{H}}^H\!\!+\!\!\bar{\mathbf{\Sigma}}_{\mathbf{n}}\big)^{-1}\bar{\mathbf{H}}\mathbf{\Sigma}_{\mathbf{s}}\big(\bar{\mathbf{H}}^H\bar{\mathbf{\Sigma}}_{\mathbf{n}}^{-1}\bar{\mathbf{H}}\!\!+\!\!\mathbf{\Sigma}_{\mathbf{s}}^{-1}\big)\mathbf{\Sigma}_{\mathbf{s}}\bar{\mathbf{H}}^H\big(\bar{\mathbf{H}}\mathbf{\Sigma}_{\mathbf{s}}\bar{\mathbf{H}}^H\!\!+\!\!\bar{\mathbf{\Sigma}}_{\mathbf{n}}\big)^{-1}\nonumber\\
&\!=\!\big(\bar{\mathbf{H}}\mathbf{\Sigma}_{\mathbf{s}}\bar{\mathbf{H}}^H\!\!+\!\!\bar{\mathbf{\Sigma}}_{\mathbf{n}}\big)^{-1}\Big[\bar{\mathbf{H}}\mathbf{\Sigma}_{\mathbf{s}}\bar{\mathbf{H}}^H\bar{\mathbf{\Sigma}}_{\mathbf{n}}^{-1}\bar{\mathbf{H}}\mathbf{\Sigma}_{\mathbf{s}}\bar{\mathbf{H}}^H\!\!+\!\!\bar{\mathbf{H}}\mathbf{\Sigma}_{\mathbf{s}}\bar{\mathbf{H}}^H\Big]\big(\bar{\mathbf{H}}\mathbf{\Sigma}_{\mathbf{s}}\bar{\mathbf{H}}^H\!\!+\!\!\bar{\mathbf{\Sigma}}_{\mathbf{n}}\big)^{-1}\label{eq:KKT_condition_2nd_term}\nonumber\\
&\!=\!\big(\bar{\mathbf{H}}\mathbf{\Sigma}_{\mathbf{s}}\bar{\mathbf{H}}^H\!\!+\!\!\bar{\mathbf{\Sigma}}_{\mathbf{n}}\big)^{-1}\Big[\bar{\mathbf{H}}\mathbf{\Sigma}_{\mathbf{s}}\bar{\mathbf{H}}^H\bar{\mathbf{\Sigma}}_{\mathbf{n}}^{-1}\big(\bar{\mathbf{H}}\mathbf{\Sigma}_{\mathbf{s}}\bar{\mathbf{H}}^H\!\!+\!\!\bar{\mathbf{\Sigma}}_{\mathbf{n}}\big)\Big]\big(\bar{\mathbf{H}}\mathbf{\Sigma}_{\mathbf{s}}\bar{\mathbf{H}}^H\!\!+\!\!\bar{\mathbf{\Sigma}}_{\mathbf{n}}\big)^{-1}\nonumber \\
&\!=\!\big(\bar{\mathbf{H}}\mathbf{\Sigma}_{\mathbf{s}}\bar{\mathbf{H}}^H\!\!+\!\!\bar{\mathbf{\Sigma}}_{\mathbf{n}}\big)^{-1}\bar{\mathbf{H}}\mathbf{\Sigma}_{\mathbf{s}}\bar{\mathbf{H}}^H\bar{\mathbf{\Sigma}}_{\mathbf{n}}^{-1}
\end{align} 
\setcounter{equation}{\value{TempEqnCnt}}
\hrulefill
\vspace*{4pt}
\end{figure*}
\setcounter{equation}{67}

Substituting (\ref{eq:KKT_condition_1st_term}) and (\ref{eq:KKT_condition_2nd_term}) into (\ref{eq:KKT_P8_1}), we can rewrite the first-order KKT conditions associated with only $\{\bar{\mathbf{F}}_i\}_{i=1}^L$  as  shown in (\ref{eq:KKT_P0_MI}) in next page.
\begin{figure*}[!t]
\normalsize
\setcounter{TempEqnCnt}{\value{equation}}
\setcounter{equation}{67}
\begin{align}
\label{eq:KKT_P0_MI}
\!\!\!\!\!\!\!\!\!\!\!\!\mathbf{H}_i^H\Big[\big(\sigma_0^2\mathbf{I}\!\!+\!\!\sum_{i=1}^L\mathbf{H}_i\bar{\mathbf{F}}_i\mathbf{\Sigma}_i\bar{\mathbf{F}}_i^H\mathbf{H}_i^H\big)\!\!+\!\!\big(\sum_{i=1}^L\mathbf{H}_i\bar{\mathbf{F}}_i\big)
\mathbf{\Sigma}_{\mathbf{s}}\big(\sum_{i=1}^L\mathbf{H}_i\bar{\mathbf{F}}_i\big)^H
\Big]^{-1}\big(\sum_{i=1}^L\mathbf{H}_i\bar{\mathbf{F}}_i\big)\mathbf{\Sigma}_{\mathbf{s}}\Big[\mathbf{I}&\\
\!\!\!\!\!\!\!\!\!\!\!\!-\big(\sum_{i=1}^L\mathbf{H}_i\bar{\mathbf{F}}_i\big)^H\big(\sigma_0^2\mathbf{I}+\sum_{i=1}^L\mathbf{H}_i\bar{\mathbf{F}}_i\mathbf{\Sigma}_i\bar{\mathbf{F}}_i^H\mathbf{H}_i^H\big)^{-1}\mathbf{H}_i\bar{\mathbf{F}}_i\mathbf{\Sigma}_i\Big]-\lambda_i\bar{\mathbf{F}}_i\big(\mathbf{\Sigma}_{\mathbf{s}}\!\!+\!\!\mathbf{\Sigma}_i\big)&=\mathbf{O},\ i\in\{1,\cdots,L\};\nonumber
\end{align}
\setcounter{equation}{\value{TempEqnCnt}}
\hrulefill
\vspace*{4pt}
\end{figure*}
\setcounter{equation}{68}

To check the conditions of the original problem ($\mathsf{P}0$), we need to determine the derivative of its Lagrangian function, or, equivalently, the derivative of $\mathsf{MI}$ with respect to $\{\mathbf{F}_i\}$. By defining 
\begin{align}
\mathbf{H}\triangleq\sum_{i=1}^L\mathbf{H}_i\mathbf{F}_i, 
\end{align} 
the derivative of $\mathsf{MI}$ can be calculated and is shown in (\ref{eq:derivative_MI}) in next page, with $C_1(\mathrm{d}\mathbf{F}_i)$ and $C_2(\mathrm{d}\mathbf{F}_i)$ being uninteresting terms involving only $\mathrm{d}\mathbf{F}_i$ and independent of $\mathrm{d}(\mathbf{F}_i^*)$.
\begin{figure*}[!t]
\normalsize
\setcounter{TempEqnCnt}{\value{equation}}
\setcounter{equation}{69}
\begin{subequations}
\label{eq:derivative_MI}
\begin{align}
\label{eq:derivative_MI_1}
\mathrm{d}(\mathsf{MI})&=\Tr\Big\{\big(\mathbf{I}\!+\!\mathbf{H}\mathbf{\Sigma}_{\mathbf{s}}\mathbf{H}^H\mathbf{\Sigma}_{\mathbf{n}}^{-1}\big)^{-1}\mathrm{d}\big(\mathbf{H}\mathbf{\Sigma}_{\mathbf{s}}\mathbf{H}^H\mathbf{\Sigma}_{\mathbf{n}}^{-1}\big)\Big\} \nonumber\\
&=\Tr\Big\{\big(\mathbf{I}\!+\!\mathbf{H}\mathbf{\Sigma}_{\mathbf{s}}\mathbf{H}^H\mathbf{\Sigma}_{\mathbf{n}}^{-1}\big)^{-1}\Big[\mathbf{H}\mathbf{\Sigma}_{\mathbf{s}}\mathrm{d}(\mathbf{H}^H)\mathbf{\Sigma}_{\mathbf{n}}^{-1}\!+\!\mathbf{H}\mathbf{\Sigma}_{\mathbf{s}}\mathbf{H}^H\mathrm{d}(\mathbf{\Sigma}_{\mathbf{n}}^{-1})\Big]\Big\}+C_1(\mathrm{d}\mathbf{F}_i)\nonumber \\
&=\Tr\Big\{\mathbf{H}_i^H\big(\mathbf{\Sigma}_{\mathbf{n}}\!\!+\!\!\mathbf{H}\mathbf{\Sigma}_{\mathbf{s}}\mathbf{H}^H\big)^{-1}\mathbf{H}\mathbf{\Sigma}_{\mathbf{s}}\Big[\mathbf{I}-\mathbf{H}^H\mathbf{\Sigma}_{\mathbf{n}}^{-1}\mathbf{H}_i\mathbf{F}_i\mathbf{\Sigma}_i\Big]\mathrm{d}(\mathbf{F}_i)^H\Big\}+C_2(\mathrm{d}\mathbf{F}_i),
\end{align}
\begin{align}
\label{eq:derivative_MI_2}
\Rightarrow \frac{\partial\mathsf{MI}}{\partial\mathbf{F}_i^{\ast}}\!&=\!\mathbf{H}_i^H\Big[\big(\sigma_0^2\mathbf{I}\!\!+\!\!\sum_{i=1}^L\mathbf{H}_i\mathbf{F}_i\mathbf{\Sigma}_i\mathbf{F}_i^H\mathbf{H}_i^H\big)\!\!+\!\!\big(\!\sum_{i=1}^L\mathbf{H}_i\mathbf{F}_i\!\big)
\mathbf{\Sigma}_{\mathbf{s}}\big(\!\sum_{i=1}^L\mathbf{H}_i\mathbf{F}_i\!\big)^H
\Big]^{-1}\!\!\!\big(\!\sum_{i=1}^L\mathbf{H}_i\mathbf{F}_i\!\big)\mathbf{\Sigma}_{\mathbf{s}}\Big[\mathbf{I}-\nonumber\\
&\qquad\qquad\qquad\big(\!\sum_{i=1}^L\mathbf{H}_i\mathbf{F}_i\!\big)^H\big(\sigma_0^2\mathbf{I}\!\!+\!\!\sum_{i=1}^L\mathbf{H}_i\mathbf{F}_i\mathbf{\Sigma}_i\mathbf{F}_i^H\mathbf{H}_i^H\big)^{-1}\mathbf{H}_i\mathbf{F}_i\mathbf{\Sigma}_i\Big], \ i\in\{1,\cdots,L\}.
\end{align}
\end{subequations}
\setcounter{equation}{\value{TempEqnCnt}}
\hrulefill
\vspace*{4pt}
\end{figure*}
\setcounter{equation}{70}

Comparing (\ref{eq:KKT_P0_MI}) with the derivative in (\ref{eq:derivative_MI_2}), it is easy to see that (\ref{eq:KKT_P0_MI}) is actually the first-order KKT condition of ($\mathsf{P}0$) that optimizes $\mathsf{MI}$. Together with (\ref{eq:KKT_P8_2}), (\ref{eq:KKT_P8_3}) and (\ref{eq:KKT_P8_4}), we see that the KKT conditions of the original problem are satisfied by $\{\bar{\mathbf{F}}_i\}_{i=1}^{L}$. This completes the entire proof.
\end{proof}

\end{document}